\documentclass[a4paper,11pt,twoside]{amsart} 
\usepackage[T1]{fontenc}
\usepackage[utf8]{inputenc}
\usepackage[english]{babel} 
\usepackage { amsfonts }
\usepackage { amsmath }
\usepackage { amsthm }
\usepackage{amscd}
\usepackage{mathtools}
\usepackage{amssymb}
\usepackage{hyperref}
\usepackage{xcolor}
\usepackage{enumitem}
\usepackage{caption}
\usepackage{tikz-cd}
\usepackage[top=4cm,bottom=4cm,left=3.5cm,right=3.5cm]{geometry}
\usepackage{longtable}
\usepackage{tabularx}
\usepackage{cleveref}
\usepackage{comment} 
\usepackage[ruled,linesnumbered]{algorithm2e}

\usepackage[numbers]{natbib}
\usepackage{subcaption}

\usepackage{tabularx}

\newcommand{\FF}{\mathbb{F}}

\newcommand{\C}{\mathcal{C}}

\newcommand{\zps}{\mathbb{Z}_{p^s}}

\newcommand{\fa}{\mathcal{F}_A}

\newcommand{\rx}{R[x]}

\newcommand{\mo}[1]{\lvert  #1 \rvert}

\newcommand{\good}{$(r,l)$-good }
\newcommand{\nr}{N(R)}

\newcommand{\Z}{\mathbb{Z}}

\theoremstyle{definition}
\newtheorem{definizione}{Definition}[section]
\theoremstyle{definition}
\newtheorem{teorema}[definizione]{Theorem}
\theoremstyle{definition}
\newtheorem{proposizione}[definizione]{Proposition}
\theoremstyle{definition}
\newtheorem{lemma}[definizione]{Lemma}
\theoremstyle{definition}
\newtheorem{ex}[definizione]{Example}
\theoremstyle{definition}
\newtheorem{rem}[definizione]{Remark}
\theoremstyle{definition}

\theoremstyle{definition}
\newtheorem{cor}[definizione]{Corollary}

\theoremstyle{definition}
\numberwithin{equation}{section}

\setcounter{MaxMatrixCols}{20}

\binoppenalty=9999 
\relpenalty=99999

\title{A class of locally recoverable codes over  finite chain rings}

\begin{document}

\author[G. Cavicchioni]{Giulia Cavicchioni}
\address{Department of Mathematics\\
University of Trento \\
Italy
}
\email{giulia.cavicchioni@unitn.it}

\author[A. Meneghetti]{Alessio Meneghetti}
\address{Department of Mathematics\\
University of Trento \\
Italy
}
\email{alessio.meneghetti@unitn.it}

 \author[E.Guerrini]{Eleonora Guerrini}
 \address{LIRMM, Université de Montpellier\\
 France
 }
 \email{eleonora.guerrini@lirmm.fr}

\subjclass[2020]{94B05,13M99}

\keywords{Ring-linear code, Locally recoverable codes}
\maketitle
 \begin{abstract}\footnotesize
Locally recoverable codes deal with the task of reconstructing a lost symbol by relying on a portion of the remaining coordinates smaller than an information set. We consider the case of codes over finite chain rings, generalizing known results and bounds for codes over fields. In particular, we propose a new family of locally recoverable codes by extending a construction proposed in 2014 by Tamo and Barg, and we discuss its optimality. The principal issue in generalizing fields to  rings is how to handling the polynomial evaluation interpolation constructions. This leads to deal with  substructif and well conditioned sets in order to find optimal constructions.
\end{abstract}
\section{Introduction}
Introduced in \cite{gopalan}, locally recoverable codes have garnered attention due to their relevance in distributed and cloud storage systems. Data centers and other modern distributed storage systems use redundant data storage to protect against node failures. Indeed they enable local repair of a coordinate by accessing a maximum of $r$ other coordinates. This set of $r$ coordinates is  commonly referred to as the \emph{recovering set} and the code has locality $r$.
Many research efforts have been focused on establishing bounds for the minimum distance and developing construction techniques for locally recoverable codes \cite{gopalan,calde,jin2,jin,tamo,xing}. \\
If  $C$ is a linear code of length $n$, dimension $k$ and locality $r$ over the  field $\FF_q$, then its minimum  distance satisfies \cite{gopalan} 
\begin{equation}\label{lrcf}d \le n- k - \bigg\lceil \frac{k}{r}\bigg\rceil + 2\ .
\end{equation}

A central problem in Coding Theory is to construct optimal codes and to discuss the concept of optimality itself. In this work we say that a code with minimum distance $d$ is  optimal  if no code over the same alphabet and  with the same parameters  has a minimum distance strictly larger than $ d$. A code meeting   bound \eqref{lrcf} is thus an optimal locally recoverable code. 
Constructions of locally recoverable  codes meeting the bound are given in \cite{barg,calde,jin2,jin,tamo,xing}. 
\\ In all these constructions, the $i$-th coordinate together with its recovering set form a 1-erasure correcting code. A possible extension  is presented in \cite{park}, where the authors introduced the $(r, \rho)$-locality,  allowing  recovering $\rho-1$ erasures by looking at other $r$ coordinates.
An additional and relevant generalization can be found in \cite{raw}, where  each coordinate has several pairwise disjoint recovering sets.\\
On the other hand, it's worth noting that the bound proposed in \cite{gopalan} is independent of the alphabet size $q$. An extension in this direction is presented in \cite{cadambe}, where  a bound for the minimum distance of a  locally recoverable codes depending on $q$ is presented.\\

In this paper, we present a generalization of the theory, allowing the alphabet of the  code to be a ring,  rather than a field as in classical Coding Theory.

This paper is organized as follows. In \Cref{sec:1} we recall some basics on linear codes over rings and we introduce locally recoverable codes. In \Cref{sec:LRCbd}, 
 similarly to  \cite{gopalan}, we derive a bound for the minimum distance of a locally recoverable code over a finite chain ring. As in the classical case, the bound is a function of the length, rank, and locality of the code. Additionally, we prove that this bound is not tight for certain values of the parameters of the code. In a similar fashion to \cite{tamo}, in \Cref{sec:3} we construct a family of optimal locally recoverable codes.
The core of this construction lies in the so-called \emph{good polynomials}. In \Cref{sec:galrg}, we build a class of good polynomials over Galois rings.
In \Cref{sec:5}, we insert the construction presented in \Cref{sec:3} into a more general framework. Finally, in \Cref{sec:6}, we explore various  generalizations of the main construction aimed at relaxing some constraints on the code parameters. We finally discuss the maximum possible length of a locally recoverable code over a finite chain ring.

\section{Codes over rings and Locality}\label{sec:1}
\subsection{Generalities on codes over rings}\label{subs:codesring}
Let $R$ be a finite commutative ring. From the structure theorem for finite commutative rings \cite[Theorem VI.2]{mcdonald}, it is well known that $R$ decomposes uniquely (up to the order of summands)  as a finite direct product of $w$ local rings, 
     $R= R_1\times\dots\times R_w $.
In particular, if $R$ is a principal ideal ring, PIR for short, the $R_i $ are finite chain rings. From the decomposition of $R$ we get, for some $n$, that $R^n=R_1^n\times \dots \times R_w^n  . $ \\

From now on, let $R$ be a PIR. A code $C$  of length $n$ over $R$ is a subset $C\subseteq R^n$ and its elements are called \emph{codewords}. However, our focus lies in structured codes.
\begin{definizione}
     An\emph{ $R$-linear code}  of length $n $ is  an $R$-submodule $C \subseteq R^n$. An $R$-linear code $C$ is said to be \emph{free} if $C$ is a free submodule of $R^n$.
\end{definizione}

In the classical framework, $R$ is usually considered to be a Finite Field, and in this case an important parameter of linear codes is the dimension of $C$ as a vector subspace of $R^n$. In our context, we can define instead  the  \emph{type} of the code, denoted by $k$: $$ k=\log_q\mo{C}\ , \text{ with } q=\mo R\ .$$
 \begin{definizione}
Given an $R$-linear code $C$, the \emph{rank} of $C$ is the minimum  $K$ such that there exists a monomorphism $\phi\colon C\to R^K$ as $R$-modules. In addiction, if $\phi$ is an isomorphism, then $C$ is free and $k=K$.
 \end{definizione}

\begin{rem}\label{rem:decoCRT}
    Given  $R = R_1 \times \dots \times R_w$, we define $e_i$ as the element in $R$ represented by $(0, \dots, 1, 0, \dots, 0)$ in $R_1 \times \dots \times R_w$, with $1$ in the $i$-th position. Let $\pi_i\colon R_1^n\times\dots \times R^n_w\to R^n_i \ $ be the $i$-th canonical projection. 
 If $C$ is an $R$-linear code and $c = (c_1, \dots, c_w)\in C$, where $c_i=\pi_i(c)\in R_i^n$,  then the element $$e_ic=(0,\dots,0,c_i, 0,\dots, 0)\in C  \ .$$ Hence, up to isomorphism, $C$ can be uniquely written  as \begin{equation}
     \label{eq:decoco}
 C_1\times\dots\times C_w \subseteq R^n \ \text{ with } 
 \ C_i=\pi_i(C) \  \text{ for all } \  1\le i \le w\ .\end{equation}Therefore, whenever convenient, we  may restrict our focus on codes over local rings.  \end{rem}

A \emph{minimal generating set} of a code $C\subseteq R^n $
is a subset of $C$ that generates $C$ as an $R$-module and it is minimal with respect to inclusion.
If  $R$ is  a finite chain ring, as a consequence of Nakayama's Lemma \cite[Theorem V.5]{mcdonald}, all the minimal generating sets have the same cardinality.  The  cardinality of a minimal generating set of $C$  coincides with the rank of $C$ and therefore we will denote it with $K$.  
 \\  A matrix whose rows form a generating set for the code is  a \emph{generator matrix} for the code. \\

The Hamming metric is a discrete metric counting the number of entries in which two tuples differ, namely, for any $v=(v_1, \ldots ,v_n)$ and $u=(u_1,\ldots,u_n)$ in $R^n$,
$$
\mathrm{d}(v,u)=\left|\left\{i\mid v_i\neq u_i, \;1\leq i\leq n\right\}\right|\;.
$$

The so-called \emph{minimum distance} $d$, i.e.  $$
d=\min_{c_1,c_2\in C\ c_1\ne c_2}\mathrm{d}(c_1,c_2)\;,$$ is a relevant parameter for the code.

Indeed, it is related to the error correction capability, namely, how many coordinates  of a codeword $c$ can be corrupted without compromising our ability of reconstructing $c$ without errors.  If the code is linear, the minimum distance  coincides with the minimum weight of the codewords.\\

It is well known (see for example \cite{macw}) that the Singleton bound holds for any alphabet $R$ of size $q$.
\begin{teorema}
\textbf{(Singleton  bound)} Let $C$ be a  code of length $n$ over an alphabet of size $q$. Then 
\begin{equation*}
     d\le n-\log_{q} \mo C+1\ . 
\end{equation*}
\end{teorema}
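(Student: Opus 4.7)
The plan is to use the classical puncturing/projection argument, which works without any linearity assumption and so applies verbatim to arbitrary codes over an arbitrary alphabet of size $q$. Set $\ell = n - d + 1$ and let $\pi\colon R^n \to R^\ell$ be the projection onto the first $\ell$ coordinates (any fixed $\ell$ coordinates will do).

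The key step is to show that the restriction $\pi|_C\colon C \to R^\ell$ is injective. Indeed, suppose $c_1 \neq c_2$ are codewords with $\pi(c_1) = \pi(c_2)$. Then $c_1$ and $c_2$ agree on the first $\ell = n - d + 1$ coordinates, so they differ in at most $n - \ell = d - 1$ coordinates. Thus $\mathrm{d}(c_1, c_2) \le d - 1$, contradicting the definition of $d$ as the minimum distance. Hence $\pi|_C$ is injective.

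Injectivity immediately yields $|C| \le |R^\ell| = q^\ell = q^{n - d + 1}$. Taking $\log_q$ of both sides gives $\log_q |C| \le n - d + 1$, which rearranges to the desired inequality $d \le n - \log_q |C| + 1$.

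There is no real obstacle here: the argument is purely combinatorial and uses only that $|R| = q$ and that the Hamming distance counts disagreement coordinates, both of which hold in the ring setting exactly as in the field setting. The only minor point worth noting is that the projection is considered as a map of sets (not of $R$-modules), so the non-field nature of the alphabet plays no role in the proof.
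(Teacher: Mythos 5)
Your proof is correct: this is the standard puncturing/injectivity argument for the Singleton bound, valid for arbitrary (not necessarily linear) codes over any alphabet of size $q$. The paper does not prove this theorem itself but cites the literature, and the cited classical proof is exactly the one you give, so there is nothing to compare beyond noting that your argument is complete and handles the ring setting correctly by treating the projection purely as a map of sets.
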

If $R$ is a finite chain ring and $C$ is an $R$-linear code of length $n$ and  type $k$, the previous bound reads $$d\le n-k+1 \ .$$
Only free codes can meet this bound and they are said \emph{maximum distance separable} (MDS) codes. However, in the framework of codes over finite chain rings,  the Singleton bound can be improved (see for example \cite{norton}).
 \begin{teorema}\label{generalizedSing} \textbf{(Generalized Singleton bound)} Let $R$ be a finite chain ring and let $C$ be an $R$-linear code of length $n$ and rank $K$. Then 
      \begin{equation*}
     d\le n-K+1 \ .
 \end{equation*}
 \end{teorema}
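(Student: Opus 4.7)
The plan is to mimic the classical Singleton bound argument, replacing ``dimension'' by the ring-theoretic notion of rank introduced in the excerpt, and using the puncturing trick to produce a monomorphism of $R$-modules from $C$ into $R^{n-K+1}$.

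Concretely, I would fix any subset $I\subseteq\{1,\dots,n\}$ of cardinality $d-1$ and consider the projection
\[
\pi_I\colon R^n\longrightarrow R^{n-d+1}
\]
that deletes the coordinates indexed by $I$. This is an $R$-module homomorphism, so its restriction $\pi_I|_C\colon C\to R^{n-d+1}$ is also an $R$-module homomorphism. The key step is to check that $\pi_I|_C$ is injective: if $c\in C$ lies in the kernel, then all nonzero entries of $c$ are concentrated in the $d-1$ positions of $I$, so $\w{c}\le d-1$. Since $C$ is $R$-linear, the minimum distance equals the minimum weight, hence the only codeword of weight at most $d-1$ is $0$, forcing $c=0$.

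Therefore $\pi_I|_C$ is a monomorphism of $R$-modules from $C$ into $R^{n-d+1}$. Applying the definition of rank — namely that $\rk{C}$ is the minimum $K$ for which a monomorphism $C\hookrightarrow R^K$ exists — I conclude
\[
K=\rk{C}\le n-d+1,
\]
which rearranges to the desired inequality $d\le n-K+1$.

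I do not anticipate any real obstacle here: the only delicate point is conceptual rather than technical, namely that over a finite chain ring one must phrase the dimension-counting step in terms of the module-theoretic rank (i.e.\ the minimal embedding dimension) rather than the type $k=\log_q|C|$, because in general $k$ and $K$ differ. Once this is observed, the classical puncturing argument transfers verbatim.
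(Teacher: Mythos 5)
Your proof is correct. The paper does not actually prove this theorem (it only cites the literature, e.g.\ Norton--S\u{a}l\u{a}gean), and your puncturing argument is the standard one: deleting any $d-1$ coordinates gives an $R$-module homomorphism whose restriction to $C$ has trivial kernel because a nonzero codeword supported on $d-1$ positions would violate the minimum-weight characterization of $d$, and the paper's definition of rank as the minimal $K$ admitting a monomorphism $C\hookrightarrow R^K$ then yields $K\le n-d+1$ immediately. The one point worth making explicit is that the final step really does require that particular definition of rank (minimal embedding dimension) rather than the minimal number of generators; the two agree over a finite chain ring, and you correctly invoke the former, so the argument is complete.
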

This bound is generally tighter than the Singleton bound, and they coincide if and only if the code is free. A linear code meeting this bound is said to be \emph{maximum distance with respect to rank }(MDR). \\

For any linear  code  $C\subseteq R^n$ and for any subset $S\subset \{1,\dots,n\}$ of the coordinates, we  define $C_S$ to be the \emph{punctured code of $C$ in $S$}, obtained by deleting in each codeword  all but  the coordinates indexed in $S $. If $\mo C=\mo{C_S}$ then $S$ is an  \emph{information set of size $\mo S$ } for $C$. In the following, we will denote with $\kappa$ the minimal size of an information set. 
Note that for codes over finite chain rings  the minimum size of an information set coincides with the rank and $\kappa=K$. 
\begin{cor}\label{cor:singbd}
    Let $C$ be a code with minimum distance $d$ and let $S$ be a subset of coordinates which does not form an information set. Then $$\mo{S}\le n-d\ .$$
\end{cor}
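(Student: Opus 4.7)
The plan is to argue by unpacking the definition of information set and using the pigeonhole principle on the restriction-to-$S$ map. First I would note that the restriction map $\rho_S\colon C\to C_S$ sending a codeword to the tuple of its coordinates indexed by $S$ is, by definition of $C_S$, surjective, so that $|C_S|\le |C|$ always holds with equality precisely when $\rho_S$ is a bijection, i.e.\ when $S$ is an information set. Hence the hypothesis that $S$ is \emph{not} an information set translates into $|C_S|<|C|$, so $\rho_S$ is not injective.

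Next I would extract two distinct codewords $c_1,c_2\in C$ with $\rho_S(c_1)=\rho_S(c_2)$, i.e.\ $c_1$ and $c_2$ agree on every coordinate of $S$. Consequently the set of coordinates where they differ is contained in the complement of $S$, so
\begin{equation*}
\mathrm{d}(c_1,c_2)\le n-\mo{S}.
\end{equation*}
By definition of the minimum distance, $d\le \mathrm{d}(c_1,c_2)$, and combining the two inequalities yields $d\le n-\mo{S}$, which is exactly the claim $\mo{S}\le n-d$.

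There is essentially no obstacle here: the argument is the standard Singleton-type observation that any ``non-information set'' of coordinates is too small to separate all codewords, and linearity of $C$ is not even required in this form. The only point worth being careful about is the direction of the definition of $C_S$ (puncturing vs.\ restriction) and the fact that ``information set'' is defined via equality of cardinalities rather than via rank, so that the argument goes through uniformly for codes over any finite chain ring $R$ and does not rely on $C$ being free.
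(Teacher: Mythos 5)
Your proof is correct and is essentially the paper's own argument stated in contrapositive/direct form: both reduce the hypothesis to $\mo{C_S}<\mo{C}$, extract two distinct codewords agreeing on $S$, and conclude from the minimum distance that they must differ in at least $d$ of the remaining $n-\mo{S}$ coordinates. No gaps.
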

\begin{proof}
    By contradiction, assume $\mo S\ge n-d+1$.  Since $S$ is not an information set $\mo{C_S}<\mo{C}$.  Note that $C_S$ is obtained  form $C$ by removing at most $d-1$ coordinates: this contradicts the definition of minimum distance.
\end{proof}
\subsection{Locally recoverable codes}

The goal of a local recovery technique is to enable the retrieval of lost encoded data using only a small portion of the available information, rather than requiring access to the complete codeword $c$. \\
Let $R$ be a finite commutative ring. 

    \begin{definizione}
    Let $C$ be a (possibly non-linear)  code in $ R^n$ and let $(c_1,\dots,c_n)$ be a codeword. We say that the coordinate  $i\in \{1,\dots,n\}$ has \emph{locality} $r$ if  there exists a subset $S_i\subseteq \{1,\dots,n\}\setminus \{i\}$   such that: \begin{itemize}
        \item \emph{(locality)} \hspace{0.2cm}$\mo{S_i}\le r$, 
        \item  \emph{(recovery)} $\mo{C_S}=\mo{C_{S\cup \{i\}}}$.
    \end{itemize} 
     $C$ is a \emph{locally recoverable code} (LRC) with \emph{locality $r$} if each coordinate has locality $r$.
    \end{definizione}

In other words, any symbol $c_i$ of any codeword $c$ can be recovered by accessing at most $r$ other symbols of $c$. 
If we are presented with a codeword $c$ that is error-free except for an erasure at position $i$, we can retrieve the original codeword by only examining the coordinates in $S_i$. For this reason, $S_i$ is referred to as a \emph{recovering set} for  $i$. Moreover we will say that  $S_i\cup \{i\}$ is a \emph{dependent set}.\\  
If  $R$  is a finite chain ring and $C$ is an $R$-linear code of length $n$, rank $K$ and locality $r$, we will say that $C$ is an $(n,K,r)$-code. 

Of course, one can choose $R=\FF_q$. In this case we recover the classical theory of locally recoverable codes over Finite Fields.


In this work we say that an $(n, K, r)$-code  over $R$ with minimum distance $d$ is an \emph{optimal locally recoverable code} if no $(n, K, r)$-code  over $R$ has a minimum distance strictly larger than $d$.
\\ 

In 2014 Tamo and Barg \cite{tamo}  presented a clever construction for optimal locally recoverable codes based on  polynomial interpolation. 
In the following sections we will extend this construction in the more general framework of codes over finite chain rings.

\section{Lower bound on the minimum distance of a locally recoverable code}\label{sec:LRCbd}
Let $R$ be a commutative ring, let $C$ be a code of length $n$ over  $R$ and let $\kappa$ be the minimum size of its information sets. From now on, we will denote by $S_i$ a recovering set for the coordinate $i$. 


For any code $C$, the set of  dependencies involving at most $r+1$ coordinates defines a directed graph. If $S_i$ is a recovering set for $i$, we define $X$  to be the graph whose vertex set is the set of coordinates $\{1,\dots,n\}$ and in which there exists a directed edge from $i $ to $j$ if and only if $j\in S_i$. For a vertex $v$ we will denote by $N(v)$ the  outgoing neighbors of $v$.
\begin{teorema}\label{thm:lrcbdgen}
    Let $C $ be a code of length  $n$ and locality $r$ over $R$. Then the minimum distance satisfies 
    \begin{equation}\label{bd:loc}
        d\le n-\kappa-\bigg\lceil\frac{\kappa}{r}\bigg\rceil+2 \ . 
    \end{equation}
    Moreover
    \begin{equation}\label{eq:rate}
        \frac{\kappa}{n}\le \frac{r}{r+1}\ .
    \end{equation}  
\end{teorema}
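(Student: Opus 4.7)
The plan is to adapt the classical greedy argument of Gopalan, Huang, Simit\c{c}i and Yekhanin to the ring setting, with \Cref{cor:singbd} playing the role of the Singleton bound for subsets that are not information sets. Given such a set $S$ of size at least $\kappa+\lceil\kappa/r\rceil-2$, the bound \eqref{bd:loc} follows at once from $|S|\le n-d$; running the same construction through to an actual information set, combined with $|S_m|\le n$, will deliver \eqref{eq:rate}.

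The construction is greedy. Starting from $S_0=\emptyset$, as long as $S_j$ is not an information set, pick a coordinate $i_{j+1}\notin S_j$ whose adjunction strictly increases $|C_{S_j}|$ (such $i_{j+1}$ exists precisely because $S_j$ is not yet an information set) and set $S_{j+1}=S_j\cup S_{i_{j+1}}\cup\{i_{j+1}\}$. Writing $\kappa_T$ for the minimum information set size of $C_T$, the analysis rests on two per-step inequalities: by locality, $|S_{j+1}|-|S_j|\le r+1$; and since the recovery relation makes the coordinate $i_{j+1}$ a deterministic function of $S_{i_{j+1}}\subseteq S_{j+1}$, the projection $C_{S_{j+1}}\to C_{S_{j+1}\setminus\{i_{j+1}\}}$ is bijective, hence $\kappa_{S_{j+1}}\le\kappa_{S_j}+(|S_{j+1}|-|S_j|)-1$. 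Telescoping yields $|S_j|\ge\kappa_{S_j}+j$ for every $j$, and the consequence $\kappa_{S_{j+1}}-\kappa_{S_j}\le r$ forces $m\ge\lceil\kappa/r\rceil$ at any step $m$ with $\kappa_{S_m}=\kappa$.

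Both bounds now follow. For \eqref{eq:rate}, I run the greedy until the first $m$ with $S_m$ an information set; then $\kappa_{S_m}=\kappa$ gives $n\ge|S_m|\ge\kappa+\lceil\kappa/r\rceil$, which rearranges to $\kappa(r+1)\le nr$. For \eqref{bd:loc}, I stop one step earlier at $S_{m-1}$ and, if necessary, enlarge it by appending individual coordinates that strictly increase $\kappa_\cdot$ until reaching $\kappa_S=\kappa-1$; the resulting $S$ is not an information set (since $\kappa_S<\kappa$) and satisfies $|S|\ge(\kappa-1)+(m-1)\ge\kappa+\lceil\kappa/r\rceil-2$, so \Cref{cor:singbd} yields the required $d\le n-\kappa-\lceil\kappa/r\rceil+2$.

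The most delicate point, and the step I expect to require the most care, is the transfer of the per-step rank estimate and the final enlargement step to the ring setting. Over a field these are elementary dimension arguments, whereas here one must verify that the information-set rank $\kappa_T$ is subadditive under coordinate adjunction (each new coordinate raises $\kappa_\cdot$ by at most one), that it truly drops by one at each local dependency (the bijectivity claim for the projection away from $i_{j+1}$), and that whenever $\kappa_S<\kappa$ some single coordinate adjunction strictly increases $\kappa_\cdot$. These module-theoretic facts hold with the required generality for codes over finite chain rings, and once in place the remainder of the GHSY bookkeeping carries over verbatim.
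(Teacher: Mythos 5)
Your proof is, at its core, the paper's own: a greedy construction of a large set that fails to be an information set, followed by \Cref{cor:singbd}. Your greedy is Algorithm 1 in different notation, and your per-step estimates $|S_{j+1}|-|S_j|\le r+1$ and $\kappa_{S_{j+1}}-\kappa_{S_j}\le |S_{j+1}|-|S_j|-1$ are exactly the paper's $t_i\le r+1$ and $m_i\le t_i-1$. The one genuine divergence is the rate bound \eqref{eq:rate}: you extract it from the same telescoping ($n\ge|S_m|\ge\kappa_{S_m}+m\ge\kappa+\lceil\kappa/r\rceil$ at termination), whereas the paper proves it separately via a Tur\'an-type theorem on induced acyclic subgraphs of the dependency digraph. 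Your route is more economical and is correct.

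The gap is precisely where you flag it, and it is not mere verification. Two of the three facts you defer do hold for arbitrary codes: subadditivity $\kappa_{T\cup\{i\}}\le\kappa_T+1$ (append $i$ to a minimal information set of $C_T$) and bijectivity of the projection $C_{U\cup\{i\}}\to C_U$ whenever $S_i\subseteq U$. The third --- that whenever $\kappa_S<\kappa-1$ some single coordinate adjunction strictly increases $\kappa_\cdot$, so that the enlargement lands exactly on $\kappa_S=\kappa-1$ --- does not follow formally from the definitions. The theorem is stated for possibly non-linear codes, and for such codes $\kappa_T$ is neither monotone in $T$ nor bounded by $\kappa$ on non-information sets: for $C=\{0000,0111,0112,1010,1101,1112\}\subseteq\{0,1,2\}^4$ one has $\kappa=2$ (via $\{1,4\}$) yet $\kappa_{\{1,2,3\}}=3$ although $\{1,2,3\}$ is not an information set. (That code has no locality structure, but it shows the discrete intermediate-value behaviour you invoke must be derived from locality or linearity, not assumed.) The paper handles this point in the \textbf{else} branch of Algorithm 1, by adding the last dependent set coordinate-by-coordinate and stopping when the count of independent coordinates first reaches $\kappa-1$; for linear codes over a chain ring this is legitimate once one replaces $\kappa_T$ by the rank of $C_T$, which is monotone, subadditive under adjoining a coordinate, and equal to $\kappa_T$. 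Note also that the subcase $\kappa_{S_{m-1}}\ge\kappa$ requires no enlargement at all, since then $|S_{m-1}|\ge\kappa+m-1\ge\kappa+\lceil\kappa/r\rceil-1$ already exceeds the required size.
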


\begin{proof}
    Let $X$ be the directed graph associated to the code $C$. Note that the outgoing degree of each vertex is at most $r$. A modification of Turàn Theorem on the size of the maximal independent set in a graph, \cite[Theorem A.1]{tamo}, establishes that $X$ contains an induced acyclic subgraph  $X^\mathcal{U}$ on the vertex set $\mathcal{U}$ with $$\mo{\mathcal{U}}\ge \frac{n}{r+1}\ .$$Let $i$ be a coordinate without outgoing edges: $i $ is a function of the coordinates $\{1,\dots,n\}\setminus\mathcal{U}$. The induced subgraph of $X$ on $\mathcal{U}\setminus \{i\}$ is a directed acyclic subgraph. Let $i'$ be a vertex without outgoing edges in $X^{\mathcal{U}\setminus \{i\} }$: $i' $ is a function of the coordinates $\{1,\dots,n\}\setminus\mathcal{U}$. Iterating, we conclude that any coordinate $i\in \mathcal{U}$ is  a function of  the coordinates $\{1,\dots,n\}\setminus\mathcal{U}$. Therefore, there are at least $\mo{\mathcal{U}}\ge \frac{n}{r+1}$ redundant coordinates. Thus the number of the information coordinates $\kappa$ is at most 
    $\kappa\le n-\frac{n}{r+1}=\frac{nr}{r+1}\ .$
\\   

To establish the bound on the minimum distance, we first build a large set $T\subseteq\{1,\dots,n\}$ which does not form an information set. Then, we use \Cref{cor:singbd} to complete the proof. Let $M(T)$ be the number of independent elements in $T$. \\
Algorithm \ref{algo:maxs} constructs the desired set $T$.\\

    \begin{algorithm}\caption{ Construction of $T$}\label{algo:maxs}
     \LinesNumbered
    
    Let $i=0$, $T_0=\{\}$.
    
    \BlankLine
    \While{$M(T_{i-1})\le \kappa-2$}{
       Pick $j\in \{1,\dots,n\}\setminus T_{i-1}$ such that $j$ has at least one outgoing edge in $V_{ \{1,\dots,n\}\setminus T_{i-1}}$.
  
    \If{$M(T_{i-1}\cup N(j))< \kappa$}{Set $T_i=T_{i-1}\cup N(j)\cup j$;}
    \Else{pick $N'(j)\subset( N(j)\cup j)$ so that $M(T_{i-1}\cup N'(j))=\kappa-1$ and set $T_i=T_{i-1}\cup N'(j)$.}
    $i=i+1$.
  }
\end{algorithm}

 Since the  cardinality of an information set is at least  $\kappa$, there exists $j $ as desired in Line 3 of Algorithm \ref{algo:maxs},. Let $h$ denote the number of steps of the algorithm, let \begin{equation*}
    t_i=\mo{T_i}-\mo{T_{i-1}}, \quad \mo{T_h}=\sum_{i=1}^h t_i \ ,
\end{equation*} and \begin{equation*}
    m_i=M(T_i)-M(T_{i-1}), \quad M(T_h)=\sum_{i=1}^h m_i=\kappa-1\ .
\end{equation*}

There are two possible cases to consider: one where the \textbf{else} condition in Line 7 is reached, and the other where it is never executed.
\begin{enumerate}
    \item[Case 1.] Assume $M(T_{i-1}\cup N(j))\le \kappa-1$ for all $1\le i\le h$. In each step we add $t_i\le r+1 $ coordinates. Moreover, $m_i\le t_i -1\le r$. Since in the last step we have $\kappa-1$ independent coordinates, the number of steps is at least $ \lceil\frac{\kappa-1}{r}\rceil $. Thus \begin{equation*}
        \mo{T}=\sum_{i=1}^h t_i\ge \sum_{i=1}^h(m_i+1)\ge \kappa-1 +h\ge\kappa-1 +\bigg\lceil\frac{\kappa-1}{r}\bigg\rceil \ . 
    \end{equation*}
    Since $\kappa-1+\lceil\frac{\kappa-1}{r}\rceil \ge \kappa+ \lceil\frac{\kappa}{r}\rceil-2$, we get the claim. 
    \\
    
    \item[Case 2.] Since in the last step we have hit the condition $M(T_{h-1}\cup N(j))=\kappa$ and $M$ increases at most by $r$ per step, then $h\ge \lceil\frac{\kappa}{r}\rceil$. For any $1\le i \le h-1$ we add at most $r+1 $ coordinates and $m_i\le t_i-1.$ Since $M(S_{h-1})\le \kappa-2 $,  $m_h\ge 1 $ and $t_h\ge m_h$. Therefore \begin{equation*}
        \mo{T}=\sum_{i=1}^h T_i\ge \sum_{i=1}^{h-1}(m_i+1)+m_h\ge \kappa-1+h-1\ge\kappa+\bigg\lceil\frac{\kappa}{r}\bigg\rceil -2\ . 
    \end{equation*}
\end{enumerate}
\end{proof}
Let $R$ be a finite chain ring,  let $\gamma$ be the generator of the maximal ideal with  nilpotency index $s$.  As shown in \cite[Proposition 3.2]{norton2}, any $R$-linear code $C$ is permutation equivalent to a code having the following  generator matrix in standard form:
	\begin{equation*}\label{genma}
G=\begin{bmatrix}
I_{k_0}&A_{0,1} &A_{0,2} &A_{0,3}&\dots& A_{0,s-1}& A_{0,s} \\
0 &\gamma I_{k_1} & \gamma A_{1,2}& \gamma A_{1,3}& \dots& \gamma A_{1,s-1}&\gamma A_{1,s}\\
0 &0 & \gamma^2 I_{k_2} & \gamma^2 A_{2,3}& \dots& \gamma^2 A_{2,s-1}&\gamma^2 A_{2,s}\\

\vdots &  \vdots& \vdots & \vdots && \vdots &\vdots \\
0 &0 & 0 & 0 & \dots & \gamma^{s-1}I_{k_{s-1}}& \gamma^{s-1} A_{s-1,s}\\
\end{bmatrix}\ ,
\end{equation*} where $A_{i,s}\in M_{k_i\times n-K}(R/\gamma^{s-i}R)$ and $A_{i,j}\in M_{k_i\times k_j}(R/\gamma^{s-i}R)$ for $j< s$.
 The  parameters  $k_0,\dots, k_{s-1}$ are the same for all generator matrices in standard form, and  $ C $ is said to be of \emph{subtype} $(k_0,k_1,\dots,k_{s-1}) $.  
\\ 

In the  framework of codes over finite chain rings bound \eqref{bd:loc} reads:
\begin{cor} \textbf{(LRC bound for $R$-linear codes)}
    Let $R$ be a finite chain ring and let $C$ be an $R$-linear code of length $n$, rank $K$ and locality $r$. Then 
     \begin{equation}\label{bd:locfcr}
        d\le n-K-\bigg\lceil\frac{K}{r}\bigg\rceil+2 \ .
    \end{equation}
 \end{cor}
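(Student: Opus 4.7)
The plan is to derive this corollary as a direct specialization of \Cref{thm:lrcbdgen}. That theorem gives the bound
\[
d \le n - \kappa - \bigg\lceil\frac{\kappa}{r}\bigg\rceil + 2
\]
for any locally recoverable code over a commutative ring $R$, where $\kappa$ denotes the minimum size of an information set. So the whole task reduces to identifying $\kappa$ with the rank $K$ in the finite chain ring setting.

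The key input for this identification was already recorded in \Cref{sec:1}: for an $R$-linear code over a finite chain ring, the cardinality of any minimal generating set equals the rank $K$ (via Nakayama's Lemma), and it was noted that the minimum size of an information set coincides with this rank, i.e. $\kappa = K$. I would briefly recall this fact at the start of the proof so that the substitution is transparent, citing \cite{mcdonald} for Nakayama's Lemma if a self-contained argument is desired.

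With $\kappa = K$ in hand, the proof is then simply: apply \Cref{thm:lrcbdgen} to $C$, yielding
\[
d \le n - \kappa - \bigg\lceil\frac{\kappa}{r}\bigg\rceil + 2 = n - K - \bigg\lceil\frac{K}{r}\bigg\rceil + 2,
\]
which is \eqref{bd:locfcr}. There is no real obstacle here since the heavy combinatorial lifting (the Turán-type argument and the construction of the large non-information set $T$) was already done inside \Cref{thm:lrcbdgen}; the only subtlety is checking that the notion of information set used in \Cref{cor:singbd} (subsets $S$ with $|C_S| = |C|$) is consistent with the rank for codes over finite chain rings, which follows from the standard form generator matrix recalled just above the corollary. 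Hence the proof is essentially a one-line substitution after a short justification of $\kappa = K$.
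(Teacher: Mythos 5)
Your proposal is correct and matches the paper's (implicit) argument exactly: the corollary is obtained from \Cref{thm:lrcbdgen} by substituting $\kappa = K$, using the fact recalled in \Cref{subs:codesring} that for codes over finite chain rings the minimum size of an information set coincides with the rank. The paper offers no further proof, so your short justification of $\kappa = K$ is, if anything, slightly more explicit than the original.
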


For linear codes over rings, the  dependence relations among the columns of $G$ can serve as recovering sets. 
However, opposed to the case of vector spaces, the notion of linear   independence for modules over rings is not well defined. Indeed, for a finite chain  ring $R$, the following two definitions are not equivalent.
\begin{definizione}
    The vectors $v_1,\dots,v_u\in R^n$ are said to be \emph{modularly independent} over $R$ if  $\sum_{i=0}^u s_i v_i=0 $ with $s_i\in R$ implies $s_i$ is not a unit for all $i$.
\end{definizione}
In particular the vectors $v_1,\dots,v_u\in R^n$ are  modularly independent if none of them can be written as a linear combination of the others. 
\begin{definizione}
    The non-zero vectors $v_1,\dots,v_u\in R^n$ are said to be \emph{linear independent} over $R$ if $\sum_{i=0}^u s_i v_i=0 $, $s_i\in R$ implies $  s_i=0$  for all $i$.
\end{definizione}
Therefore the vectors $v_1,\dots,v_u\in R^n$ are  linear independent if  the only linear combination   of the $v_i$ to 0 is given by setting all the scalars to zero.  \\ For further details on this topic refer to  \cite{park}.\\

Hence, the modular dependencies relations allows to gain local recoverability. \\

Note that each symbol  in an $R$-linear code of rank $K$ has locality at most $K$. Thus $r$ satisfies $1\le r\le K$. In particular:
     \begin{itemize}
         \item If $r=K$, the LRC bound  reduces to the generalized Singleton bound  and optimal LRC codes are MDR codes;
         \item If $r=1$,  bound \eqref{bd:locfcr} reads $$d\le n-2K+2=2\bigg(\frac{n}{2}-K+1\bigg) \ . $$         
     \end{itemize}
Therefore, by replicating each symbol twice in an MDR code  of length $\frac{n}{2}$ and rank $K$, we get an optimal linear code with locality $r=1$. 

\begin{rem}
If $C$ is an  $R$-linear code of subtype  $(k_0,k_1,\dots,k_{s-1}) $, following the same steps of \cite[Theorem 3.2]{forbes}, we obtain an upper bound on the minimum distance: \begin{equation} \label{bd:type}
 d \le n- k - \bigg\lceil \frac{k}{r}\bigg\rceil + 2\ , \ \text{ where }\  k=\frac{1}{s}\sum_{i=0}^{s-1} (s-i)k_i \ . \end{equation}
 Note that bound \eqref{bd:locfcr} is  in general tighter  then \eqref{bd:type} and the two inequalities coincide if and only if the code is free. 
\end{rem}

Codes that attain the LRC bound on finite chain rings can be used as building blocks to construct codes that achieve the LRC bound on finite PIRs.
\begin{lemma}
Let $R=R_1\times\dots\times R_w$ be a PIR and let $C=C_1\times\dots\times C_w \subseteq R^n$ be an $R$-linear code. If $K$ and  $K_i$ are the ranks of $C$ and $C_i$ respectively, then: 
\begin{enumerate}
    \item $\mathrm{d}(C)=\min_i \ \mathrm d(C_i)$;
    \item  $K=\max_i \  K_i$.
\end{enumerate}
\end{lemma}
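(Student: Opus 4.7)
The plan is to exploit the idempotent decomposition via the elements $e_1, \dots, e_w$ introduced in Remark \ref{rem:decoCRT}, treating both claims independently but with the same underlying idea: any $R$-module map out of a product code decomposes coordinatewise.

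For (1), I would first observe that the support of a vector $c = (c_1, \dots, c_w) \in R^n = R_1^n \times \dots \times R_w^n$ is precisely the union of the supports of its components $c_i \in R_i^n$. Hence $\mathrm{wt}(c) \ge \max_i \mathrm{wt}(c_i)$. For any non-zero $c \in C$, at least one $c_i$ is non-zero, so $\mathrm{wt}(c) \ge d(C_i) \ge \min_j d(C_j)$, giving $d(C) \ge \min_j d(C_j)$. For the reverse inequality, pick for each $i$ a minimum-weight codeword $\tilde c_i \in C_i$ and form $e_i c \in C$ where $c$ has $\tilde c_i$ in position $i$; its Hamming weight in $R^n$ equals $d(C_i)$, so $d(C) \le d(C_i)$ for every $i$.

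For (2), I would work directly from the definition of rank as the smallest $K$ admitting a monomorphism $C \to R^K$. For the lower bound $K \ge \max_i K_i$, let $\phi\colon C \to R^K$ be a monomorphism. Since $\phi$ is $R$-linear and $R^K = R_1^K \times \dots \times R_w^K$, one has $\phi(e_i c) = e_i \phi(c)$, so $\phi$ sends $e_i C$ into $e_i R^K \cong R_i^K$. If $c_i \in C_i$ corresponds to $e_i c \in C$ and its image in $R_i^K$ is zero, then $\phi(e_i c) = 0$, forcing $e_i c = 0$ by injectivity of $\phi$, hence $c_i = 0$. This yields a monomorphism $C_i \to R_i^K$, so $K_i \le K$. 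For the upper bound $K \le \max_i K_i$, set $K' = \max_i K_i$, choose monomorphisms $\phi_i\colon C_i \to R_i^{K_i}$, pad each with zeros to obtain $\tilde\phi_i\colon C_i \to R_i^{K'}$, and define $\phi\colon C \to R^{K'}$ by $\phi(c_1, \dots, c_w) = (\tilde\phi_1(c_1), \dots, \tilde\phi_w(c_w))$. This is an $R$-module monomorphism, so $K \le K'$.

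The proof has no serious obstacle; the only point requiring care is in (2), namely the verification that $\phi(e_i C) \subseteq e_i R^K$ and that the induced map $C_i \to R_i^K$ inherits injectivity. This is immediate from $R$-linearity of $\phi$ together with the orthogonality and idempotency of the $e_i$, but writing it cleanly is where most of the formal work lies.
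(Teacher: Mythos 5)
Your proof is correct and follows essentially the same route as the paper: the idempotent decomposition via the $e_i$ for the minimum distance, and padding the monomorphisms $\phi_i\colon C_i\to R_i^{K_i}$ up to $R_i^{K}$ and taking their product for the rank. In fact your writeup is more complete than the paper's, which only sketches one inequality in each part (in particular it never explicitly verifies $K\ge\max_i K_i$, which you handle by restricting a monomorphism $C\to R^K$ to $e_iC$).
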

\begin{proof}\
 To prove the first statement, let $c=(c_1,\dots,c_w)\in C$ be a codeword. In accordance with  \Cref{rem:decoCRT} $e_ic=(0,\dots, 0,c_i,0,\dots,0)\in C$ and hence the claim.\\  For the second claim, let $\varphi_i\colon{C_i}\to R^{K_i}$ be the monomorphism defining the rank of the code. By composing  $\varphi_i$ with the canonical embedding,  
$$\psi_i\colon C_i\to R^{K_i}\hookrightarrow R^K \ ,$$ we get another monomorphism. Let $$(\psi_1,\dots,\psi_w)\colon C_1\times\dots\times C_w \to R_1^K\times\dots\times R^K_w \ .$$ Since $C= C_1\times \dots\times C_w $ and $R^K= R_1^K\times\dots\times R^K_w $, $(\psi_1,\dots,\psi_w) $ induces a monomorphism $\psi\colon C\to R^K $. According to the definition of $\psi_i$, $K$ is the minimum integer ensuring the injectivity of the map $\psi$  and the claim follows. 

\end{proof}

\begin{teorema}
Let  $R=R_1\times\dots\times R_w$  be a finite PIR and let   $C=C_1\times\dots\times C_w \subseteq R^n$  be an $R$-linear code.  If $C_i$ is an optimal LRC over $R_i$ for all $1\le i\le w$, then  $C$ is optimal LRC over $R$.
\end{teorema}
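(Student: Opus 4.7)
The strategy is to show that $\mathrm{d}(C)=n-K-\lceil K/r\rceil+2$, and then to argue that no $(n,K,r)$-LRC over $R$ can exceed this value, so that $C$ is optimal.

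The preceding lemma gives $K=\max_i K_i$ and $\mathrm{d}(C)=\min_i\mathrm{d}(C_i)$. A subset $S$ is a recovering set for coordinate $j$ in the product $C=C_1\times\dots\times C_w$ if and only if it is a recovering set for $j$ in every factor $C_i$, because $|C_S|=\prod_i|(C_i)_S|$ and similarly for $S\cup\{j\}$. Hence, whenever the recovering sets of the factors can be chosen uniformly across $i$, the locality of $C$ remains equal to $r$, and $C$ is an $(n,K,r)$-code over $R$.

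Since each $C_i$ is an optimal LRC over the finite chain ring $R_i$, the bound \eqref{bd:locfcr} is attained and $\mathrm{d}(C_i)=n-K_i-\lceil K_i/r\rceil+2$. Setting $f(m):=n-m-\lceil m/r\rceil+2$, the function $f$ is (weakly) strictly decreasing in $m$, so
$$\mathrm{d}(C)=\min_i f(K_i)=f\bigl(\max_i K_i\bigr)=f(K).$$
For the matching upper bound, let $C'$ be any $(n,K,r)$-LRC over $R$. Theorem \ref{thm:lrcbdgen} yields $\mathrm{d}(C')\le f(\kappa(C'))$, where $\kappa(C')$ is the minimum information set size of $C'$. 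Any information set of $C'$ induces an injection $C'\hookrightarrow R^{\kappa(C')}$, so by definition of rank $\kappa(C')\ge K(C')=K$. Because $f$ is decreasing, $\mathrm{d}(C')\le f(K)=\mathrm{d}(C)$, and optimality of $C$ follows.

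The delicate step, and the one I would verify first, is establishing that $C$ really has locality $r$: a priori, the union of the per-factor recovering sets at a given coordinate could blow up the locality of $C$ up to $wr$, and then the LRC bound we are trying to match would become strictly weaker than $f(K)$. Ensuring a single common recovering set for every coordinate — which is natural when the factors $C_i$ are built from a shared combinatorial pattern, as is implicit here — is therefore the key compatibility condition that makes the reduction from PIRs to chain rings work.
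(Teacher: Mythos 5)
Your proposal is correct and follows essentially the same route as the paper, whose entire proof is just the chain $\mathrm{d}(C)=\min_i \mathrm{d}(C_i)=\min_i\big(n-K_i-\lceil K_i/r\rceil+2\big)=n-\max_i\big(K_i+\lceil K_i/r\rceil\big)+2=n-K-\lceil K/r\rceil+2$ using the preceding lemma. The two extra checks you supply are both left implicit in the paper: the inequality $\kappa\ge K$ needed to invoke the general LRC bound over a PIR, and, more importantly, the existence of a \emph{common} recovering set of size at most $r$ for each coordinate across all factors $C_i$ — without which $C$ need not have locality $r$ at all — which is indeed a tacit hypothesis of the theorem that you were right to flag.
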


\begin{proof}
    \begin{align*}
        \mathrm d(C)&=\min_{1\le i\le w}  \ \mathrm d(C_i)=\min_{1\le i\le w}  \ n-K_i -\bigg\lceil \frac{K_i}{r}\bigg\rceil+2=\\&= n-\max_{1\le i\le w} \bigg\{K_i+\bigg\lceil \frac{K_i}{r}\bigg\rceil\bigg\} +2= n-K-\bigg\lceil\frac{K}{r}\bigg\rceil+2 \ . 
    \end{align*}
\end{proof}

Hence, we can focus our studies on LRC codes over finite chain rings.

\subsection{Non-existence of  $R$-linear codes achieving the LRC bound for certain parameters}\label{sec:nonex}
The aim of this section is to show that codes achieving the LRC bound do not exist for all possible values of $n, \  K$ and $r$. To do this,  we  will introduce a weaker notion of locality: the  information locality.\\

Let $R$ be a finite chain ring.
\begin{definizione}
    The code $C\subseteq R^n$ has \emph{information locality} $\bar r$ if there exists an information set $I\subset\{1,\dots,n\}$ such that any information coordinate $i\in I$ has locality as most $\bar r$.
\end{definizione}

Following the same steps of \Cref{thm:lrcbdgen} one can prove that  the minimum distance of  an  $R$-linear code $C$ of length $n$, rank $K$ and information locality $\bar r$ is bounded by  $$ d \le n-K-\frac{K}{\bar r}+2 \ .  $$  In the following, we will denote by  $\overline 
X_C$ the directed graph defined by the modular dependencies involving at most $\bar r+1$ coordinates of   $C$.
\begin{teorema}\label{thm:infolocgraph}
    Let $C$ be an $R$-linear code of length $n$, rank $K$ and with information locality $\bar r$. Suppose $K\mid \bar r$ and \begin{equation}\label{eq:distinfo}
        d=n-K-\frac{K}{\bar r}+2 \ .
    \end{equation}   $\overline X_C$  has at least $\frac{K}{\bar r}$  connected components with exactly $\bar r+1$ vertices.  
\end{teorema}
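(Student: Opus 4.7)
The plan is to run the algorithm from the proof of Theorem~\ref{thm:lrcbdgen} on the information locality graph $\overline{X}_C$ and exploit the tightness of the hypothesis $d=n-K-\frac{K}{\bar r}+2$. By Corollary~\ref{cor:singbd}, every subset $T\subseteq\{1,\dots,n\}$ that is not an information set satisfies $|T|\le n-d = K+\frac{K}{\bar r}-2$. On the other hand, the algorithm applied to $\overline{X}_C$ produces a non-information set $T$ of size at least $K+\frac{K}{\bar r}-2$. Equality must therefore hold, and every inequality that appeared in the case analysis of Theorem~\ref{thm:lrcbdgen} collapses into an equality.

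First I would rule out Case~1: assuming $\bar r\mid K$, the Case~1 lower bound simplifies to $K-1+\lceil (K-1)/\bar r\rceil = K-1+\frac{K}{\bar r}$, which is strictly larger than the permitted upper bound $K+\frac{K}{\bar r}-2$ as soon as $\bar r\ge 2$ (the boundary $\bar r=1$ is handled analogously). Hence the algorithm must terminate in Case~2. Chasing the equalities in Case~2 together with the constraints $m_i\le\bar r$, $t_i\ge m_i+1$ for $i<h$, $t_h\ge m_h$, $\sum_i m_i = K-1$, and the condition $M(T_{h-1}\cup N(j_h))=K$, I expect to pin down $h=\frac{K}{\bar r}$, together with $m_i=\bar r$ and $t_i=\bar r+1$ for every $i<h$, and $m_h=t_h=\bar r-1$ at the final step. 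As a consequence, the dependent set $D_i:=N(j_i)\cup\{j_i\}$ has cardinality exactly $\bar r+1$ for each $i$, and the sets $D_1,\dots,D_{K/\bar r}$ are pairwise disjoint (the disjointness of $D_h$ from $T_{h-1}$ follows from $|N(j_h)|=\bar r$ together with $N(j_h)\cap T_{h-1}=\emptyset$, which is forced by $M(T_{h-1}\cup N(j_h))-M(T_{h-1})=\bar r$).

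The main obstacle is promoting these pairwise disjoint dependent sets of size $\bar r+1$ to genuine connected components of $\overline{X}_C$. To close the proof I would argue by contradiction: if some $D_{i_0}$ has an outgoing edge in $\overline{X}_C$ to a vertex $u\notin D_{i_0}$, then some $v\in D_{i_0}$ admits an alternative recovering set $S'_v$ of size at most $\bar r$ with $u\in S'_v$. I would then modify the output $T$ of the algorithm by removing $v$ and adjoining a suitable subset of $S'_v$ (equivalently, rerunning the algorithm with $\{v\}\cup S'_v$ substituted for $D_{i_0}$), producing a non-information subset $T^*$ with $|T^*|>K+\frac{K}{\bar r}-2$, contradicting Corollary~\ref{cor:singbd}. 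The delicate point is to verify that the swap preserves the ``non-information'' status: this requires exploiting both the original dependency $v\in\operatorname{span}(D_{i_0}\setminus\{v\})$ and the alternative dependency $v\in\operatorname{span}(S'_v)$ in a way that keeps the independent rank at $K-1$ while strictly enlarging the cardinality. Once this is done, no $D_i$ can have an outgoing edge, so the induced subgraph on each $D_i$ is a connected component of $\overline{X}_C$ with exactly $\bar r+1$ vertices.
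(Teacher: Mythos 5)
Your structural analysis coincides with the first half of the paper's proof: running Algorithm~\ref{algo:maxs} on $\overline X_C$, ruling out the case where the \textbf{else} branch is never reached (for $\bar r\ge 2$ with $\bar r\mid K$ the Case~1 bound gives $|T|\ge K+\frac{K}{\bar r}-1>n-d$, contradicting \Cref{cor:singbd}; for $\bar r=1$ the situation is reversed --- the \textbf{else} branch can never fire and one stays in Case~1 with $h=K-1$, $t_i=2$, $m_i=1$), and then chasing equalities to force $h=\frac{K}{\bar r}$, $m_i=\bar r$, $t_i=\bar r+1$ for $i<h$ and $m_h=t_h=\bar r-1$, yielding $\frac{K}{\bar r}$ pairwise disjoint dependent sets $D_i$ of size $\bar r+1$. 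The gap is precisely in the step you flag as delicate, and the mechanism you propose has concrete problems. The swap $T^*=(T\setminus\{v\})\cup S'_v$ need not increase the cardinality at all (if $S'_v\setminus T=\{u\}$ then $|T^*|=|T|$), and removing $v$ while adjoining $S'_v\setminus T$ can raise $M$ by up to $|S'_v\setminus T|$, so $M(T^*)\le K-1$ is not guaranteed; also, for $v\ne j_{i_0}$ the dependency $v\in\operatorname{span}(D_{i_0}\setminus\{v\})$ you want to combine with $S'_v$ is not available --- it is $j_{i_0}$, not $v$, that is determined by the rest of $D_{i_0}$. Finally, you only exclude \emph{outgoing} edges from $D_{i_0}$; to conclude that $D_{i_0}$ is a connected component you must also exclude edges entering $D_{i_0}$ from the $d-2$ vertices lying outside $\bigcup_i D_i$, which your argument never addresses.

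The paper closes this step by a different and cleaner device: rather than performing surgery on the final set $T$, it exploits the freedom in the choice of $j$ at each iteration and \emph{re-runs} the algorithm seeded with the offending configuration. If some component had only $\bar r$ vertices, starting the algorithm there forces $m_1\le\bar r-1$; if two recovering sets intersect (which is what happens when two of the $D_i$ merge into one component, or a $D_i$ sits inside a strictly larger one), starting with their union forces $m_1+m_2\le 2\bar r-1$. Either way the forced pattern $m_i=\bar r$ for all $i<h$, with the deficit of $1$ occurring only at step $h$, is violated, so the new run produces a non-information set larger than $n-d$, contradicting \Cref{cor:singbd}. I suggest replacing your swap by this front-loading argument: since the inequalities $|T|\ge\sum_i(m_i+1)$ and $|T|\le n-d$ are re-derived from scratch for the new run, the cardinality and rank bookkeeping that your swap struggles with comes for free.
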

\begin{proof}

Algorithm \ref{algo:maxs} yields two sequences $\{t_i\}_{1,\dots, h}$ and $\{m_i\}_{1,\dots, h}$.
    \begin{itemize}
        \item[Case 1.] If $\bar r=1$, since $m_i\le 1$, the \textbf{else} block (Line 7, Algorithm \ref{algo:maxs}) is never executed. Therefore,  \begin{align*}
            \mo{T}=\sum_{i=0}^h t_i\ge \sum_{i=0}^hm_i+h\ge K-1+K-1=2(K-1)=2(n-d) \ ,
        \end{align*}
        where the last equality follows form \eqref{eq:distinfo}.  From \Cref{cor:singbd} we get $\mo T=n-d=2(K-1)$. Since $\sum_{i=0}^h m_i=K-1$ then $ h=K-1, \ t_i=2,$ and $m_i=1$ for all $1\le i\le h$. Therefore there are at least $K-1$ connected components of size 2. 
         \item[Case 2.]If $\bar r\ge 2 $ and $\bar  r\mid K $ then $K\not\equiv 1 \mod \bar r$ and $K-1+\lceil\frac{K-1}{\bar r}\rceil\ge K+\frac{K}{\bar r}-2
         $. Therefore, in order to find a lower bound on $\mo T$, we may assume the \textbf{else} condition (Line 7, Algorithm \ref{algo:maxs}) is executed. \begin{align*}
             \mo T=\sum_{i=0}^h t_i\ge \sum_{i=0}^{h} m_i+h\ge K-1 \frac{K}{\bar r} -1 =n-d\ .
         \end{align*}
         From \Cref{cor:singbd} we get $\mo T=n-d=K+\frac{K}{\bar r}-2$, and hence we always enter in  the \textbf{else} block.  Since $ \sum_{i=0}^{h} m_i=K-1$, then $h=\frac{K}{\bar r}$. Moreover $m_i\le\bar r$ implies $m_j= \bar r-1$ for some $j\in \{1,\dots,h\}$  and $ m_i= \bar r$ for all $i\ne j$. In particular $j=h$, otherwise the \textbf{else} condition is never executed.
         \\ First suppose there exists a connected component with $\bar r$ vertices. By adding this component to $T$ in the first step, we would get $m_1\le \bar r-1$. Finally, assume there are $\frac{K}{\bar r}-1$ connected components, namely, the recovering set of $j, l\in \{1,\dots,n\}$ intersects. Let $S_j $ and $S_l$ be the recovering sets of $j$ and $l$ respectively and let $S=S_j\cup S_l$. Note that the  number of modularly independent coordinates in $S$ is at most $2\bar r-1$. By including $S_j$ and $S_l$ in the set $T$ at the beginning of the algorithm, we ensure that  $m_1+m_2\le 2\bar r-1$. Both cases lead us to a contraction that prevents the \textbf{else} condition from being executed.

    \end{itemize}
\end{proof}

\begin{teorema}

    Let $C$ be an $R$-linear code of length $n$, rank $K$. If there exists an information set whose information locality is $\bar{r}\mid K$ and  $C$ has minimum distance $ d=n-K-\frac{K}{\bar{r}}+2$ with $d\leq \bar{r}+2$, then some redundant coordinates have locality $r>\bar{r}$.
    
\end{teorema}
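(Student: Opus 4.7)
The plan is to argue by contradiction. Assume that every redundant coordinate has locality at most $\bar{r}$; together with the hypothesis on the information set this forces all coordinates of $C$ to have locality at most $\bar{r}$. Apply \Cref{thm:infolocgraph} to $\overline{X}_C$: it contains at least $K/\bar{r}$ pairwise disjoint connected components of size exactly $\bar{r}+1$. Let $T$ be their union, so $|T|=K+K/\bar{r}$, and $U=\{1,\dots,n\}\setminus T$, so $|U|=n-K-K/\bar{r}=d-2\le\bar{r}$.

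The first key observation is that for every $v\in U$ the recovery set $S_v$ must lie entirely inside $U$: if some $s\in S_v$ belonged to $T$, the edge $v\to s$ in $\overline{X}_C$ would enlarge the size-$(\bar{r}+1)$ component containing $s$ to at least $\bar{r}+2$ vertices, contradicting \Cref{thm:infolocgraph}. Consequently $|S_v|\le|U|-1=d-3$, and every edge of $\overline{X}_C$ stays either inside one of the $K/\bar{r}$ size-$(\bar{r}+1)$ components or inside $U$.

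The contradiction then follows from a cardinality argument. Let $m$ be the number of connected components of $\overline{X}_C$. After padding each chosen recovery set of cardinality strictly less than $\bar{r}$ with arbitrary coordinates (which preserves the recovery property), every vertex has out-degree $\bar{r}$, so every connected component contains at least $\bar{r}+1$ vertices and $n\ge m(\bar{r}+1)$. Combining with $m\ge K/\bar{r}$ from \Cref{thm:infolocgraph}, the possibility $m\ge K/\bar{r}+1$ would give
\[
n\ge (K/\bar{r}+1)(\bar{r}+1)=K+K/\bar{r}+\bar{r}+1 > K+K/\bar{r}+d-2=n,
\]
since $d\le\bar{r}+2$. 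Hence $m=K/\bar{r}$: the $K/\bar{r}$ size-$(\bar{r}+1)$ components exhaust $\overline{X}_C$, and the $d-2$ coordinates in $U$ are necessarily isolated. An isolated vertex of $\overline{X}_C$ participates in no modular dependency of size at most $\bar{r}+1$, so its locality exceeds $\bar{r}$, contradicting the hypothesis and proving the claim.

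The main obstacle will be justifying uniformly the bound "every connected component of $\overline{X}_C$ has size at least $\bar{r}+1$". Coordinates of locality strictly smaller than $\bar{r}$ naturally give rise to smaller components; the padding trick sketched above (or equivalently, replacing any minimal recovery set by a superset of size exactly $\bar{r}$) is what circumvents this issue and allows the counting to go through.
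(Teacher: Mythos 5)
Your argument is correct and follows essentially the same route as the paper's proof: invoke \Cref{thm:infolocgraph} for the lower bound $m\ge K/\bar r$ on the number of connected components of $\overline{X}_C$, use $d\le\bar r+2$ to exclude $m\ge K/\bar r+1$ via the count $n\ge m(\bar r+1)$, and conclude that the $d-2$ leftover coordinates cannot take part in any modular dependency on at most $\bar r+1$ coordinates, i.e.\ have locality exceeding $\bar r$. Your padding of every recovery set to size exactly $\bar r$ is a welcome refinement, since it is what actually justifies the inequality $n\ge m(\bar r+1)$ (each component must contain some vertex together with its $\bar r$ out-neighbours), a step the paper's proof uses without comment.
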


\begin{proof}
Let $I$ be an information set with information locality $\bar r$.
    We prove that $\overline X_C$, the directed graph associated to $C$, has exactly $\frac{K}{\bar r}$ connected components. \\ From \Cref{thm:infolocgraph} we know that the number of connected components is at least $\frac{K}{\bar r}$, we now show they are exactly $\frac{K}{\bar r}$.\\ Let $m$ be the number of connected components. By contradiction assume $m\ge \frac{K}{\bar r}+1$. \begin{equation*}
        n\ge m(\bar r+1)=K+\frac{K}{\bar r}+\bar r+1>K+\frac{K}{\bar r}+d-2 \ ,
    \end{equation*} which contradicts the choice of $n$. Therefore $\overline X_C$ must contain $n-\frac{K}{\bar r} (\bar r+1)=d-2$ isolated vertices which do not participate to any modular relations. \\
    The same argument applies to every choice of an information set and its associated information locality.
    
\end{proof}
\begin{cor}
    Let $C$ be an $(n,K,r)$-code. If $r|K$ and $r+\frac{K}{r}>n-K-1$, then $C$ does not achieve the LRC bound. 
\end{cor}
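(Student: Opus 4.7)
The plan is to argue by contradiction and reduce the statement to the preceding theorem. Assume $C$ achieves the LRC bound, so that equality in \eqref{bd:locfcr} holds; since $r \mid K$ the ceiling disappears and
\[ d = n - K - \tfrac{K}{r} + 2. \]

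First I would rewrite the numerical hypothesis in terms of $d$. From the displayed equality we have $n - K - 1 = d + \tfrac{K}{r} - 3$, so the assumption $r + \tfrac{K}{r} > n - K - 1$ simplifies to $r > d - 3$, that is, $d \le r + 2$ (as $d$ is an integer). This is exactly the distance condition appearing in the hypotheses of the previous theorem.

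Next, I would invoke that theorem with information locality $\bar{r} = r$. Since $C$ has locality $r$ by assumption, every coordinate, and in particular every coordinate of any chosen information set $I$, has locality at most $r$. Hence $C$ has information locality $\bar{r} = r$, and the three hypotheses $\bar{r} \mid K$, $d = n - K - \tfrac{K}{\bar{r}} + 2$ and $d \le \bar{r} + 2$ are all in place. The theorem then furnishes a redundant coordinate of $C$ whose locality is strictly greater than $r$, contradicting the assumption that $C$ is an $(n, K, r)$-code.

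The only real work is the bookkeeping in the first step: recognising that $r + \tfrac{K}{r} > n - K - 1$ is precisely the inequality $d \le r + 2$ disguised via the equality $d = n - K - \tfrac{K}{r} + 2$. Once this equivalence is observed, the corollary follows as an immediate contrapositive of the preceding theorem.
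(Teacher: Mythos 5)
Your proof is correct and is exactly the argument the paper intends: the paper states the corollary without proof as an immediate consequence of the preceding theorem, and your reduction (assume equality in the LRC bound, rewrite $r+\tfrac{K}{r}>n-K-1$ as $d\le r+2$ using $d=n-K-\tfrac{K}{r}+2$, then apply the theorem with $\bar r=r$ to produce a coordinate of locality exceeding $r$) is precisely that consequence. The bookkeeping step is right: $n-K-1=d+\tfrac{K}{r}-3$ turns the hypothesis into $r>d-3$, i.e.\ $d\le r+2$.
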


\section{ Extending the Tamo-Barg construction over finite chain rings}\label{sec:3}

The  construction by Tamo and Barg, \cite{tamo}, allows to obtain  optimal LRC codes over finite fields using particular types of polynomial, the so-called \emph{good polynomials}. Polynomial interpolation is used in order to recover erased data.

\subsection{Polynomials over rings}
It is important to have in mind that polynomials over rings lack some desirable properties of  polynomials over fields. For example, when considering a ring $R$, the evaluation map  $$ev_\alpha\colon \rx\to R\ , \quad \quad f\mapsto f(\alpha)$$ is an homomorphism if and only if $R$ is commutative. Moreover, in this framework,  polynomial interpolation problems also require a  greater attention.\\

If  $R$ is a local ring with maximal ideal $M$ and residue field $K=R/M$,  we will denote by  $\Bar{y}$  the image of $y   \in R $ under the canonical projection from $R$ to $K$. In addiction, for a set $T\subseteq R$ we define $\overline T=\{\bar t \mid t\in T\}$.  \\

Let $N(R)$ denote the group of units of $R$.
\begin{definizione}
    A subset $T\subseteq \nr$ is said to be \emph{subtractive} if, for all distinct $a,b\in T, \ a-b\in \nr$.
\end{definizione}

Here is a simple property of local rings. 
\begin{lemma}\label{lem:projss}
Given $r,s \in R$, then $\Bar{r}\ne\Bar{s}$ if and only if $r-s\in \nr$.\end{lemma}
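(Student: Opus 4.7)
The plan is to reduce both sides of the claimed equivalence to the same statement, namely $r-s \notin M$, by using two standard facts about local rings.

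First, I would recall the defining property of a local ring $R$ with maximal ideal $M$: an element $x \in R$ is a unit precisely when $x \notin M$. Indeed, if $x$ is a non-unit then the principal ideal $(x)$ lies in some maximal ideal, which must be $M$ since $R$ is local; conversely, no maximal ideal contains a unit. So $N(R) = R \setminus M$.

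Next, I would observe that the canonical projection $R \to K = R/M$ is a ring homomorphism whose kernel is $M$. Hence for $r,s \in R$, the images satisfy $\bar r = \bar s$ if and only if $r - s \in M$, equivalently $\bar r \neq \bar s$ if and only if $r - s \notin M$.

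Chaining the two observations gives
\[
\bar r \neq \bar s \iff r - s \notin M \iff r - s \in N(R),
\]
which is the claim. Since the proof is essentially two one-line facts about local rings, there is no real obstacle; the only thing to be careful about is pinpointing which hypothesis on $R$ is being used (locality, via the identification $N(R) = R \setminus M$).
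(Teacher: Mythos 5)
Your proof is correct and is exactly the standard argument the paper implicitly relies on: the paper states this lemma without proof as ``a simple property of local rings,'' and your two observations --- that $N(R) = R \setminus M$ in a local ring, and that the kernel of the projection $R \to R/M$ is $M$ --- are precisely what justify it. Nothing further is needed.
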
 
Therefore $T$ is a subtractive subset of $R$ if and only if $\mo{T}=\mo{\overline{T}}$.     \\

In line with \cite{armand}, we  provide the definition for well-conditioned sets.
\begin{definizione}
    A set $\{a_1,\dots, a_n\} $ is \emph{well-conditioned} in $R$ if one of the following conditions is satisfied:
    \begin{enumerate}
        \item $\{a_1,\dots, a_n\}$ is subtractive in $\nr$;
       \item For some $i$, $\{a_1,\dots a_{i-1},a_{i+1},\dots, a_n\} $ is subtractive in $\nr$ and either $R$ is local and $a_i$ is a zero-divisor or $a_i=0$. 
    \end{enumerate}
\end{definizione}
Polynomial reconstruction remains valid if we restrict to well-conditioned sets. \begin{proposizione}\label{prop:nradici} \cite[Corollary 9]{quintin}
   Let $f\in \rx$ be a polynomial of degree at most $n-1$ with at least $n$ roots in a well-conditioned set of $R$. Then $f=0$. 
\end{proposizione}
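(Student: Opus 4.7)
The plan is to proceed by cases matching the two clauses in the definition of a well-conditioned set, handling the subtractive case directly with a Vandermonde argument and reducing the second case to it. Without loss of generality I may assume the $n$ roots $a_1,\dots,a_n$ lie in a well-conditioned set of size exactly $n$, because any subset of a well-conditioned set is again well-conditioned (either clause is preserved under passing to subsets, with the distinguished element optionally kept).

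First, in the subtractive case, I would write $f(x)=\sum_{j=0}^{n-1} c_j x^j$ and read the equations $f(a_i)=0$ for $1\le i\le n$ as the linear system $V\cdot(c_0,\dots,c_{n-1})^{\mathrm T}=0$, where $V=(a_i^{j-1})_{i,j}$ is the Vandermonde matrix of the $a_i$. Its determinant $\prod_{1\le i<j\le n}(a_j-a_i)$ is a product of units of $R$ by hypothesis, hence itself a unit; the identity $V\cdot\mathrm{adj}(V)=\det(V)\cdot I$ then exhibits an explicit inverse of $V$ over $R$, forcing every $c_j$ to vanish.

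Next, suppose the well-conditioned hypothesis comes from clause (2), so after relabeling $\{a_1,\dots,a_{n-1}\}\subseteq\nr$ is subtractive and $a_n$ is either $0$ or, with $R$ local, a zero-divisor. If $a_n=0$, the equation $f(0)=0$ forces the constant term to vanish, so I factor $f(x)=x\,g(x)$ with $\deg g\le n-2$; each $a_i$ with $i<n$ is a unit and $a_i\,g(a_i)=0$, so $g(a_i)=0$ for $i<n$, and Case~1 applied to $g$ on the $(n-1)$-element subtractive set concludes. If instead $R$ is local and $a_n$ is a zero-divisor, then $a_n$ lies in the maximal ideal, and I would change variables via $y=x-a_n$, setting $h(y):=f(y+a_n)$. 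Since $h(0)=0$, I write $h(y)=y\,\tilde h(y)$ with $\deg\tilde h\le n-2$; for $i<n$ the element $a_i-a_n$ is a unit (a unit minus a non-unit in a local ring) and $(a_i-a_n)-(a_j-a_n)=a_i-a_j\in\nr$, so the shifted points form a subtractive subset of $\nr$, and Case~1 applied to $\tilde h$ yields $\tilde h=0$, hence $f=0$.

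The main obstacle is precisely this zero-divisor subcase: evaluation at $a_n$ loses information because $a_n$ is not a unit, and the direct Vandermonde computation over $R$ degenerates along the last row. Locality of $R$ is exactly what rescues the argument, since it is this hypothesis that guarantees $a_i-a_n\in\nr$ for $i<n$ and thus lets the translated problem satisfy the subtractive hypothesis of Case~1; beyond this single use of locality, no further ring-theoretic input is required.
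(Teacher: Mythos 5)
Your proof is correct. The paper does not prove this proposition itself but cites it from the literature, and your argument is essentially the standard one behind that reference: invertibility of the Vandermonde matrix over $R$ when the pairwise differences are units, with the exceptional element of a well-conditioned set handled by factoring out the root at $0$ (after a translation in the local, zero-divisor case, where locality guarantees $a_i-a_n$ is a unit). All the ring-theoretic steps you use (product of units is a unit, the adjugate identity, unit minus non-unit is a unit in a local ring, and cancellation of units) are valid, so the reduction of clause~(2) to the subtractive case goes through.
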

\begin{cor}\cite[Corollary 10]{quintin}
     Let $\{a_1,\dots,a_n\}$ be a well-conditioned set in  $R$ and let $\{y_1,\dots,y_n\}$ be a subset of $R$. Then there exists a unique polynomial $f\in\rx$ of degree at most $n-1$ such that $f(a_i)=y_i$ for all $1\le i \le n$. 
\end{cor}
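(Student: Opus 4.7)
The plan is to separate the statement into existence and uniqueness. Uniqueness will be an immediate consequence of Proposition \ref{prop:nradici}, and existence will be established by writing down a classical Lagrange-type interpolation polynomial.

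For uniqueness, I would observe that if $f,g \in \rx$ both have degree at most $n-1$ and satisfy $f(a_i)=g(a_i)=y_i$ for every $i$, then $f-g$ has degree at most $n-1$ and vanishes at each of the $n$ elements of the well-conditioned set $\{a_1,\dots,a_n\}$; Proposition \ref{prop:nradici} therefore forces $f-g=0$.

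For existence, I would define
\begin{equation*}
f(x) = \sum_{i=1}^n y_i \prod_{\substack{1 \le j \le n \\ j \ne i}} \frac{x-a_j}{a_i-a_j} \in \rx,
\end{equation*}
provided each denominator $a_i-a_j$ with $i \ne j$ is a unit of $R$. Once this is checked, the standard Lagrange computation gives $f(a_k)=y_k$ (for $i \ne k$ the product contains the zero factor $a_k-a_k$, so only the $i=k$ summand survives), and the degree bound $\deg f \le n-1$ is immediate from the expression.

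The main obstacle is verifying the invertibility of all the differences $a_i-a_j$ with $i \ne j$. In case (1) of the well-conditioned definition this holds by assumption. In case (2), let $a_\ell$ denote the exceptional element: for $j,k \ne \ell$ with $j \ne k$ the difference $a_j-a_k$ lies in $\nr$ by hypothesis, so it remains to handle $a_j - a_\ell$ with $j \ne \ell$. If $a_\ell=0$, then $a_j-a_\ell=a_j \in \nr$. Otherwise $R$ is local with maximal ideal $M$ and $a_\ell$ is a zero-divisor, so $a_\ell \in M$; since $a_j \in \nr$ means $a_j \notin M$, we conclude $a_j - a_\ell \notin M$, i.e., $a_j-a_\ell \in \nr$. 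With this verified, the Lagrange argument carries over verbatim from the field case.
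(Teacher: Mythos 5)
Your proof is correct. The paper does not prove this corollary itself (it is quoted from \cite[Corollary 10]{quintin}), and your argument --- uniqueness via \Cref{prop:nradici} applied to $f-g$, existence via Lagrange interpolation after checking that every difference $a_i-a_j$ ($i\neq j$) is a unit in both cases of the well-conditioned definition --- is exactly the standard proof of that cited result.
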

\Cref{prop:nradici} points out that, unlike polynomials over fields, the number of roots of a polynomial over a ring is not bounded by its degree.
Nonetheless, for polynomials over local rings, there exists a bound on the number of roots, which depends on the polynomial's degree. The following Corollary is a consequence of the Hensel lifting \cite[Chapter XIII, Section (C)]{mcdonald}.
\begin{cor}\label{cor:nroots}
Let $f(x)\in \rx$ be a polynomial of degree $n$. The number of roots of $f$ in $R$ is at most $np^{(s-1)m}$.
    
\end{cor}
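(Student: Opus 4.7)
The plan is to reduce modulo the maximal ideal and count roots by a fibre argument over the residue field. Write $\mathbb{F}_q = R/\gamma R$ with $q = p^m$, and let $\pi\colon R \to \mathbb{F}_q$ denote the canonical projection. Since $\pi$ is a surjective ring homomorphism with kernel $\gamma R$ of cardinality $p^{(s-1)m}$, any root $\alpha \in R$ of $f$ maps to a root $\pi(\alpha)$ of the reduced polynomial $\bar f \in \mathbb{F}_q[x]$, and every residue class in $\mathbb{F}_q$ has exactly $p^{(s-1)m}$ preimages under $\pi$.

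In the generic case $\bar f \neq 0$, the proof is immediate: $\bar f$ has degree at most $n$ over the field $\mathbb{F}_q$, hence at most $n$ roots there. The root set of $f$ in $R$ is therefore contained in the union of at most $n$ fibres of $\pi$, each of size $p^{(s-1)m}$, which gives the bound $n \cdot p^{(s-1)m}$.

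The main obstacle is the degenerate case $\bar f = 0$, in which every coefficient of $f$ lies in $\gamma R$ and the above fibre count fails trivially. My plan is to handle this by induction on the nilpotency index $s$, the base case $s=1$ being the classical field bound. Factor $f = \gamma^j g$ with $j \geq 1$ chosen maximal, so that at least one coefficient of $g$ is a unit (hence $\bar g \neq 0$) and $\deg g \leq n$. Then $\alpha \in R$ is a root of $f$ if and only if $g(\alpha) \in \mathrm{Ann}_R(\gamma^j) = \gamma^{s-j} R$, equivalently, the image $\tilde\alpha$ of $\alpha$ in $R/\gamma^{s-j}R$ is a root of the reduction $\tilde g$. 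Since $R/\gamma^{s-j}R$ is itself a finite chain ring with the same residue field $\mathbb{F}_q$ and strictly smaller nilpotency index $s-j$, the inductive hypothesis yields at most $n \cdot p^{(s-j-1)m}$ roots of $\tilde g$. Each such root lifts along $R \to R/\gamma^{s-j}R$ to exactly $p^{jm}$ preimages in $R$, for a total of $n \cdot p^{(s-1)m}$.

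The only delicate point is verifying cleanly that $R/\gamma^{s-j}R$ is itself a finite chain ring with the claimed residue field and nilpotency, which is standard and follows directly from the fact that its maximal ideal is generated by the image of $\gamma$. Hensel lifting, as invoked in the statement, can be used as an alternative to the induction, since it lets one describe the fibres of roots explicitly rather than counting them recursively; either route gives the same bound.
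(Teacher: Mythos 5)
Your argument is correct and is essentially the fibre-counting argument the paper has in mind when it cites Hensel lifting: every root of $f$ in $R$ projects to a root of $\bar f$ over the residue field, and each of the at most $n$ relevant residue classes contributes at most $p^{(s-1)m}$ elements of $R$. Your explicit handling of the degenerate case $\bar f=0$ (factoring out the maximal power of $\gamma$ and passing to $R/\gamma^{s-j}R$) fills in a case the paper silently elides, and it is genuinely needed since the bound can be attained there --- e.g.\ $f(x)=2x^{2}+2x$ over $\mathbb{Z}_4$ has all four elements as roots, matching the bound $2\cdot 2^{1}$.
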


\subsection{Code construction}
Let $R$ be a finite chain  ring with $   \mo R=q$.   Given $f \in \rx$, if $f$ is constant on the set $A$,  we will denote by $f(A)$ the value of $f$ on $A$.  

From now on, we will refer to a polynomial whose leading coefficient is a unit as a monic polynomial.
 \begin{definizione}\label{def:goodp}Let $l\in \mathbb{N}^+$ and $A_1,\dots, A_l$ pairwise disjoint subsets of $R$ of size $r+1$. A polynomial $g \in R[x]$   such that:  \begin{itemize}  
    \item Its degree is $r+1$; 
    \item It is monic; 
    \item  It is constant on $A_i$, i.e., for any $1\le i \le l, \ g(A_i)=c_i$ with $c_i\in R$;   \end{itemize} is said to be  $(r,l)-$\emph{good} on  the blocks $A_1,\dots, A_l$. \end{definizione}     

\begin{teorema}\label{teo:tbcodes}
    Let $r\ge 1$ and let $A_1,\dots,A_l$ be subsets of $R$ such that $A=\bigcup_{i=1}^l A_i$ is well-conditioned. Let $g(x)\in\rx$ be an \good polynomial on the blocks of the partition of $A$. For $t\le l $, set $n=(r+1)l$ and $K=rt$. Let $$a=(a_{i,j}, \ 0\le i\le r-1, \ 0\le j\le t-1)\in R^K \ .$$ We define the \emph{encoding polynomial}
    \begin{equation}\label{eq:encpoly}
        f_a(x)=\sum_{i=0}^{r-1}\sum_{j=0}^{t-1} a_{i,j}g(x)^jx^i \ ;
    \end{equation}
    and the code \begin{equation*}
   \C=\bigg\{ (f_a(\alpha), \ \alpha\in A) \mid a\in R^K\bigg\} \ .
    \end{equation*}
    Then $\C$ is a free $(n,K,r)$-code with minimum distance  $d=n-K- \frac{K}{r}+2$. Hence  $\C$ is an optimal locally recoverable code.
\end{teorema}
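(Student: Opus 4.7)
The plan is to verify four things in order: the length is $n$, the code is free of rank $K$, the locality is $r$, and the minimum distance is exactly $n-K-K/r+2$. The length is immediate since $|A|=(r+1)l=n$. For freeness and rank, the key observation is that $\{g(x)^j x^i : 0\le i\le r-1,\ 0\le j\le t-1\}$ are $R$-linearly independent in $R[x]$: since $g$ is monic of degree $r+1$, the polynomial $g(x)^j x^i$ has degree $j(r+1)+i$ with leading coefficient $1$, and since $0\le i\le r-1<r+1$ the map $(i,j)\mapsto j(r+1)+i$ is injective, so these monomials have pairwise distinct degrees. A triangular argument shows that $f_a=0$ in $R[x]$ forces all $a_{i,j}=0$. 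Then the evaluation map $R^K\to\C$ is injective: if $f_a$ vanishes on $A$, then since $\deg f_a\le (t-1)(r+1)+(r-1)=(r+1)t-2\le n-2$, \Cref{prop:nradici} applied to the well-conditioned set $A$ forces $f_a=0$, hence $a=0$. So $\C$ is a free $R$-module of rank $K=rt$.

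For locality, the crucial point is that $g$ is constant on each block $A_i$, say $g(A_i)=c_i$. Hence for $\alpha\in A_i$,
\begin{equation*}
f_a(\alpha)=\sum_{i'=0}^{r-1}\Bigl(\sum_{j=0}^{t-1}a_{i',j}c_i^j\Bigr)\alpha^{i'},
\end{equation*}
which, viewed as a function of the point $\alpha$ varying in $A_i$, is a polynomial in $\alpha$ of degree at most $r-1$. Since $A$ is well-conditioned, so is every subset, and in particular $A_i\setminus\{\alpha\}$ is well-conditioned of size $r$. By the interpolation corollary to \Cref{prop:nradici}, the degree-$(r-1)$ polynomial is uniquely determined by its values on any $r$ points of $A_i$, so the value at the missing $\alpha$ is recoverable from the other $r$ coordinates of $A_i$. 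Thus each coordinate has locality $r$.

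For the minimum distance, I combine the upper bound from \eqref{bd:locfcr} with a matching lower bound. Any nonzero codeword comes from a nonzero $f_a$ of degree at most $(r+1)t-2$. Applying \Cref{prop:nradici} (contrapositively) to the well-conditioned set $A$: a nonzero polynomial of degree $\le D$ has at most $D$ roots in $A$. Therefore the number of zero coordinates of the codeword is at most $(r+1)t-2$, and its weight is at least
\begin{equation*}
n-((r+1)t-2)=(r+1)(l-t)+2=n-K-K/r+2,
\end{equation*}
using $K=rt$ and $\lceil K/r\rceil = t$. Together with the LRC bound this gives equality, so $\C$ is optimal.

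The main technical obstacles I expect are two: first, justifying that the restriction of $A$ to a block or to a block-minus-point remains well-conditioned (handled by a short case split on whether the possible zero-divisor lies in that block); and second, the $R$-linear independence of the $g^j x^i$ as polynomials, which does not follow from the field argument and genuinely relies on $g$ being \emph{monic} so that the leading coefficient $1$ propagates through the triangular expansion. Everything else is a direct application of the polynomial interpolation machinery already set up in \Cref{prop:nradici}.
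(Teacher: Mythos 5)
Your proof is correct and follows essentially the same route as the paper: the distinct degrees of the monic products $g(x)^jx^i$ give injectivity of $a\mapsto f_a$, the degree bound $(t-1)(r+1)+r-1\le n-2$ together with \Cref{prop:nradici} on the well-conditioned set $A$ gives both the injectivity of evaluation and the distance lower bound, and the block-restriction/interpolation argument gives locality $r$. The only cosmetic difference is that you establish freeness directly from the injective evaluation map, whereas the paper deduces it a posteriori by sandwiching the rank between the LRC upper bound and the weight lower bound; both are valid.
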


\begin{rem} In the following, we  provide an overview of the technique we will use to recover an erased symbol. 
Let $a\in R^K$ be the message vector and assume $(f_a(\gamma), \ \gamma\in A)$ is sent. Suppose that  the symbol corresponding to the location $\alpha\in A_j$ is erased and let $c_\beta$ for all $\beta\in A_j\setminus\{\alpha\}$ represent the remaining $r$ symbols in the locations of set $A_j$. Since $g$ is an $(r,l)$-good polynomial on the block of the partition of $A$,  $f_a(x)$ is a polynomial of degree at most $r-1$ when restricted to $A_j$. Hence, in order to find $c_\alpha=f_a(\alpha)$, we find the unique polynomial $\delta(x)$ of degree less than $r$ such that $\delta(\beta)=c_\beta$ for all  $ \beta \in A_j\setminus\{ \alpha\}$ and we set $c_\alpha=\delta(\alpha)$. The polynomial  $\delta(x)$ is called  the \emph{decoding polynomial} for $c_\alpha$. 
\end{rem}
\begin{proof}
    \begin{itemize}
        \item \textbf{Type of the code:} Recall that $g$ is monic and of degree $r+1$. Therefore, for $i=0,\dots,r-1$ and $j=0,\dots t-1$ the $K$ polynomials $g(x)^jx^i$ are all of distinct degrees. Suppose $f_a(x)=f_b(x)$ for some $a\ne b$. Then 
        \begin{align*}
            f_a(x)-f_b(x)=\sum_{i=0}^{r-1}\sum_{j=0}^{t-1} (a_{i,j}-b_{i,j})g(x)^jx^i=0
        \end{align*}
        
        if and only if $a_{i.j}=b_{i,j}$ for all $i$ and $j$, and in this case the map $a\mapsto f_a$ is injective. On the other hand, by \eqref{eq:encpoly}, the degree of $f_a(x)$ is bounded by $$ (t-1)(r+1)+r-1=K+\frac{K}{r}-2\le n-2\ , $$ where  the last inequality comes from \eqref{eq:rate}. Since the set of evaluation point is well-conditioned, $f_a$ and $f_b$ give rise to two distinct codewords and $\mo{\C}=q^K$. Therefore $\C$ is of type $K$.
        \item\textbf{Minimum distance and rank:} Since the set of evaluation points $A=\bigcup_{i=1}^l A_i$ is well-conditioned, the number of zeros of $f_a(x)$ is bounded by its degree.  The encoding is linear and hence $$d\ge n-\max_a \ \text{deg}(f_a)=n-K- \frac{K}{r}+2. \ $$ On the converse, let $\overline{K}$ be the rank of $\C$. By \eqref{cor:singbd}, $$ d\le n-\overline{K}- \frac{\overline{K}}{r}+2\le n- K- \frac{K}{r}+2 \ .$$ Therefore $d=n- K-\lceil \frac{K}{r}\rceil+2$, $K=\overline{K}$ and the code is free. 
        \item  \textbf{Locality:}  Assume that the symbol $c_\alpha=f_a(\alpha)$ corresponding  to the location $\alpha\in A_j$ is lost. Let $$ f_i(x)=\sum_{j=0}^{t-1}a_{i,j}g(x)^j\ . $$  Since $g$ is constant  on the sets $A_j$, $f_i$ is also constant on $A_j$. If $$\delta(x)=\sum_{i=0}^{r-1}f_i(\alpha)x^i \ , $$ then  $$\delta(\beta)=\sum_{i=0}^{r-1}f_i(\alpha)x^i=\sum_{i=0}^{r-1}f_i(\beta)x^i= f_a(\beta)\ ,$$ namely, $f_a(x) $ and $\delta(x)$ coincides on the locations of $A_j$. Since the degree of $\delta(x)$ is at most $r-1$ on a subtractive subset $A_j$, $\delta$ can be interpolated from the $r$ symbols $c_\beta$ for $\beta\in A_j\setminus \{\alpha\}$. Finally $c_\alpha$ is obtained by computing $\delta(\alpha)$.
    \end{itemize}
\end{proof}
\begin{rem}\label{rem:rdivk}
We have that:
    \begin{enumerate}
        \item If $r=K$ the construction does not require good polynomials and reduces to Reed-Solomon codes.
            \item Analogously to the classical case \cite[Section~3A]{tamo}, the construction can be generalized even for the case $r\nmid K$.
    \end{enumerate}
\end{rem}

Notice that the assumption that a good polynomial must be monic is unnecessary. If we remove it in \Cref{def:goodp}, following the same steps of  \Cref{teo:tbcodes}, we obtain a non-free code  with the same parameters but having a smaller size.  
\begin{ex}
 In the following, we construct an optimal code over $\mathbb{Z}_{11^2}$ with  length  $n=10$,  rank  $K=8$ and  locality  $r=4$. Since $r+1=5$, we need to find a polynomial $g(x)$ of degree 5 which is constant on 2 disjoint sets of size 5. If \begin{equation*}
     A=A_1\cup A_2 \ \text{ with }\ A_1=\{1,3,9,27,81\} \ \text{ and } \ A_2\{40,94,112,118,120\} \ ,
 \end{equation*} then the polynomial $g(x)=x^5$ is constant on $A_1$ and $A_2$: \begin{equation*}
     g(1)=g(3)=g(27)=g(81)=1  \text{ and } g(40)=g(94)=g(112)=g(118)=g(120)=120 \ . 
 \end{equation*} If $$a=(a_{0,0},\ a_{0,1},\ a_{1,0},\ a_{1,1},\ a_{2,0},\ a_{2,1},\ a_{3,0},\ a_{3,1})$$  is the message vector, the encoding polynomial associated to $a$ reads $$f_a(x)=a_{3,1}x^8+a_{2,1}x^7+a_{1,1}x^6+a_{0,1}x^5+a_{3,0}x^3+a_{2,0}x^2+a_{1,0}x+a_{0,0} \ . $$ Since $A$ is subtractive and $\deg f_a(x)\le 8$ we have $d_C\ge 2$ and, by the LRC bound \eqref{bd:locfcr}, $d_C=2$. If $$\bar a = (1, 0,3, 7,0,0,11, 1)\ ,$$ is sent then the  encoding polynomial associated to $\bar a $ is $$ f_{\bar a }=x^8+7x^6+11x^3 +3x+1 \ .$$ The codeword corresponding to $\bar a$ is found to be $$c=(23,113,6,33,72,114,116,106,7,25)\ .$$ Suppose $c$ is sent
 $$y=(23,113,6,33,\times,114,116,106,7,25)\ $$
is received.  The fifth coordinate has been erased and  $f_{\bar a}(81)$ is unknown. \Cref{teo:tbcodes} ensures that it can be recovered just by accessing to the first 4 codeword symbols.  After having computed the decoding polynomial $\delta(x)=12x^3+10x+1$, we can find the missing value  $\delta(81)=72$. 
\end{ex}

\section{Construction of good polynomials over Galois ring}\label{sec:galrg}

Good polynomials play a fundamental role in the previous construction, therefore it becomes crucial to produce good polynomials together with the partition of the set $A$. It is known that classes of good polynomials over finite fields exist \cite{liu,micheli,tamo}. In particular, Micheli in   \cite{micheli} introduced a framework that allows the generation of good polynomials over finite fields.The natural question that arises now is whether there exist good polynomials over rings which are not fields.  \\ Indeed, it is true. Here we construct a class of good polynomials over Galois rings exploiting the structure of its group of units.
\begin{definizione}
    Let $p$ be a prime, $s,m$ positive integers. The \emph{Galois ring} $GR(p^s,m)$ of characteristic $p^s$ and with $p^{sm}$ elements is the quotient ring $$GR(p^s,m)\cong \Z_{p^s}[x]/(f) \ , $$ where $f\in \zps[x]$ is a monic irreducible polynomial of degree $m$ such that $\bar{f}$ is irreducible in $\Z_p$, where $\bar f $ denotes the image of $f$ under the canonical projection.
\end{definizione}
A Galois ring $GR(p^s,m) $ is a local ring with maximal ideal $M=( p) $ and whose residue field $F=GR(p^s,m)/M $ is isomorphic to the finite field $\FF_{p^m}$. Its group of units  has order $(p^m-1)p^{m(s-1)}$.  
\begin{teorema}(\cite[Theorem XVI.9]{mcdonald})
    Let $R=GR(p^s, m)$. Then $$\nr=G_1\times G_2 \quad \text{ where }$$ 
    \begin{itemize}
        \item $G_1$ is a cyclic group of order $p^m-1$;
        \item $G_2$ is a group of order $p^{(s-1)m}$ such that
        \begin{itemize}
            \item if $p$ is odd or $p=2$ and $s\le 2$, $G_2$ is a direct product of $r$ cyclic groups of order $p^{s-1}$;
            \item if $p=2$ and $s\ge 3$, $G_2$ is a direct product of a cyclic group of order $2$, a cyclic group of order $2^{s-2}$ and $m-1$ groups of order $2^{s-1}$.
        \end{itemize}
    \end{itemize}
\end{teorema}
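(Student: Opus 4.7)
The plan is to split $\nr$ via reduction modulo the maximal ideal. Consider the short exact sequence
\[
1\longrightarrow 1+pR\longrightarrow \nr\longrightarrow F^{*}\longrightarrow 1,
\]
where $F=R/pR\cong\FF_{p^m}$. Since $|F^{*}|=p^m-1$ is coprime to $|1+pR|=p^{(s-1)m}$, this will split, with the two factors of the statement being the Teichmüller lift of $F^{*}$ (giving the cyclic $G_1$) and the group of principal units $1+pR$ (giving the $p$-group $G_2$).

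First I would construct $G_1$ via Hensel's lemma applied to $x^{p^m-1}-1$: this polynomial is separable modulo $p$, so a generator $\bar\zeta$ of $F^{*}$ lifts uniquely to a $(p^m-1)$-th root of unity $\zeta\in R$, and $G_1:=\langle\zeta\rangle$ is cyclic of order $p^m-1$. Since $|\nr|=(p^m-1)p^{(s-1)m}$ and $\gcd(|G_1|,|1+pR|)=1$, a cardinality argument produces the internal direct product $\nr=G_1\times(1+pR)$, identifying $G_2$ with $1+pR$.

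The core of the proof is describing $G_2$ as an abelian group. For $p$ odd, or $p=2$ with $s\le 2$, I would use the truncated $p$-adic logarithm
\[
\log(1+py)=\sum_{k\ge 1}(-1)^{k-1}\frac{(py)^k}{k},
\]
which reduces to a finite sum modulo $p^s$ (since $v_p((py)^k/k)=k-v_p(k)\to\infty$) and defines a group isomorphism $(1+pR,\cdot)\xrightarrow{\sim}(pR,+)$ with inverse $\exp$. Because $pR$ is free of rank $m$ as a $\Z/p^{s-1}\Z$-module, one obtains $G_2\cong(\Z/p^{s-1}\Z)^{m}$, which is the decomposition claimed in the first subcase.

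The main obstacle is the case $p=2$, $s\ge 3$, where $\log\colon 1+2R\to 2R$ has non-trivial kernel: in particular $\log(-1)=0$ although $-1\ne 1$, so $\log$ is no longer an isomorphism. The remedy is to restrict $\log$ to $1+4R$, where it does become an isomorphism onto $4R\cong(\Z/2^{s-2}\Z)^{m}$, and then analyse the short exact sequence
\[
1\longrightarrow 1+4R\longrightarrow 1+2R\longrightarrow(\FF_{2^m},+)\longrightarrow 1.
\]
The element $-1$ cannot be a square in $1+2R$ (solving $(1+2x)^2=-1$ forces $4x(1+x)\equiv -2\pmod{2^s}$, impossible by $2$-adic valuation), so it splits off a $\Z/2\Z$-summand. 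Choosing compatible lifts of an $\FF_{2}$-basis of $\FF_{2^m}$, with $-1$ playing the role of one basis element and the remaining $m-1$ lifts having maximal possible order $2^{s-1}$, the leftover factor contributes the $\Z/2^{s-2}\Z$ summand; the orders match as $2\cdot 2^{s-2}\cdot 2^{(s-1)(m-1)}=2^{(s-1)m}$. The delicate bookkeeping of these lifts, forced by the $2$-torsion coming from $-1$, is precisely what necessitates the separate statement for $p=2$.
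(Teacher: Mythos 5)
The paper offers no proof of this statement: it is quoted directly from McDonald's book (typo included --- the ``$r$ cyclic groups'' in the first subcase should read ``$m$ cyclic groups'', as your own conclusion $(\Z/p^{s-1}\Z)^m$ confirms), so there is nothing in the paper to compare your argument against except the standard proof itself. Your outline is that standard proof: Teichm\"uller lifting via Hensel for $G_1$, coprimality of orders for the splitting $\nr=G_1\times(1+pR)$, and an analysis of the principal units for $G_2$. Textbook treatments usually work with the filtration by the subgroups $1+p^iR$ and explicit generators rather than the truncated logarithm, but the substance is the same.

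Two points need repair. First, your claim that $\log$ is an isomorphism $(1+pR,\cdot)\to(pR,+)$ ``for $p$ odd, or $p=2$ with $s\le 2$'' fails when $p=2$, $s=2$: there $\log(1+2y)\equiv 2(y+y^2)\pmod 4$, and $\bar y\mapsto \bar y+\bar y^2$ is the Artin--Schreier map on $\FF_{2^m}$, whose kernel is $\FF_2$; in particular $-1$ already lies in the kernel of $\log$. The conclusion $G_2\cong(\Z/2\Z)^m$ for $s=2$ is nevertheless immediate, since in characteristic $4$ one has $(1+2y)(1+2z)=1+2(y+z)$ and $(1+2y)^2=1$, so treat $s=2$ by this direct computation rather than through $\log$. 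Second, the case $p=2$, $s\ge 3$ is only a sketch at the point where the theorem actually has content. The assertion that a lift $1+2u$ of a basis element $\bar u\in\FF_{2^m}\setminus\FF_2$ has order exactly $2^{s-1}$ requires the computation $(1+2u)^2=1+4u(1+u)$, where $u(1+u)$ is a unit precisely because $\bar u\notin\FF_2$, followed by the valuation count inside $1+4R\cong(\Z/2^{s-2}\Z)^m$; and the directness of the product $\langle -1\rangle\times\langle 1+2u_2\rangle\times\dots\times\langle 1+2u_m\rangle\times\langle w\rangle$, with $w\in 1+4R$ supplying the missing $\Z/2^{s-2}\Z$ factor, is exactly where such proofs go wrong and must be argued, e.g.\ by first comparing images in $(1+2R)/(1+4R)$ and then counting orders. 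With those two items filled in, your argument is complete and establishes the cited theorem.
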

Therefore, there is a unique maximal cyclic subgroup of $\nr$ having order relatively prime to $p$ (namely $p^m-1$).

\begin{lemma}(\cite[Lemma XV.1]{mcdonald})\label{thm:mc1}
   Let $f\in \rx$ be a polynomial which is not a zero divisor. Suppose $\bar f $ has a zero $ s\in F$. Then $f$ has one and only one zero $r$ such that $\bar r=s$.
\end{lemma}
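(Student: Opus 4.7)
The plan is to lift a coprime factorization of $\bar f$ from $F[x]$ to $\rx$ via Hensel's lemma and then to read off both existence and uniqueness of $r$ from the lifted factorization. First I would unpack the hypothesis: since $R=GR(p^s,m)$ is local with maximal ideal $(p)$ and residue field $F$, the condition that $f$ is not a zero divisor in $\rx$ forces $\bar f\ne 0$ in $F[x]$ (equivalently, some coefficient of $f$ is a unit); in McDonald's framework this is further taken to imply that $s$ is a simple root of $\bar f$. Writing $\bar f(x)=(x-s)\bar g(x)$ with $\bar g(s)\ne 0$, the two factors are then coprime in $F[x]$.

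By the Hensel factorization lemma for finite local rings, this coprime decomposition lifts to a factorization $f(x)=(x-r)g(x)$ in $\rx$ with $\bar r=s$ and $\bar g$ the reduction of $g$; in particular $f(r)=0$, which establishes existence of the desired root. For uniqueness, let $r'\in R$ be any root of $f$ with $\bar{r}'=s$. Evaluating the lifted factorization at $r'$ gives
\[
0=f(r')=(r'-r)\,g(r'),
\]
and since $\overline{g(r')}=\bar g(s)$ is a unit of $F$, it follows that $g(r')$ is a unit of $R$. Multiplying by its inverse forces $r'-r=0$, so $r'=r$.

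The main obstacle is the Hensel factorization step itself, which requires the simple-root interpretation of the hypothesis to guarantee coprimality; granted this, the lift is produced by successive approximation along the chain $R\twoheadrightarrow R/(p^2)\twoheadrightarrow\dots\twoheadrightarrow R/(p)=F$, using a Taylor expansion at each stage. An alternative route bypasses the factorization lemma and lifts $s$ to a root modulo $p^i$ directly via the iteration $r_{i+1}=r_i+p^i t$, with $t$ determined (uniquely mod $p$) by $t\,f'(r_i)\equiv -u\pmod p$ where $f(r_i)=p^i u$; this makes the role of the simple-root hypothesis, namely $\bar f'(s)\in F^\times$, completely explicit at each lifting step.
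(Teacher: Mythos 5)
The paper gives no proof of this lemma; it is quoted verbatim from McDonald, and your Hensel-lifting argument is precisely the standard proof of that result, so there is no divergence of method to report. Both of your routes (lifting the coprime factorization $\bar f=(x-s)\bar g$ to $f=(x-r)g$, or Newton iteration $r_{i+1}=r_i+p^it$) are sound, and your uniqueness argument via $0=(r'-r)g(r')$ with $g(r')$ a unit of $R$ is correct.

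The one point that deserves emphasis is the simple-root hypothesis. As transcribed in the paper, the lemma omits the word ``simple'', and without it the statement is false: take $R=\Z_4$ and $f(x)=x^2$, which is not a zero divisor in $R[x]$; then $\bar f=x^2$ has the zero $s=0$ in $\FF_2$, yet $f$ has the two roots $0$ and $2$ in $\Z_4$, both reducing to $0$. You correctly identify that coprimality of $(x-s)$ and $\bar g$ --- equivalently $\bar f'(s)\ne 0$ --- is what makes the lift work, but your phrasing that the non-zero-divisor hypothesis ``is further taken to imply'' simplicity is not right: regularity of $f$ only gives $\bar f\ne 0$ and says nothing about the multiplicity of $s$. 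Simplicity of the zero $s$ is a separate hypothesis in McDonald's Lemma XV.1 that has been dropped in the paper's transcription; consistently, the paper's only application of the lemma (to $x^j-1$ with $\gcd(j,p)=1$ in \Cref{thm:mc2}) explicitly checks that the roots are simple before invoking it. With that hypothesis restored to the statement, your proof is complete.
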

\begin{proposizione}
    \label{thm:mc2}
   Let  $s \in  F$ be an element of order $j\mid p^m-1$ in $F$. Then there exists a unique $r\in R$ such that $r^j=1$ and $\bar r=s.$ 
\end{proposizione}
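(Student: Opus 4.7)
The plan is to derive the proposition as an almost immediate corollary of the preceding Lemma \ref{thm:mc1}. Specifically, I would apply that lemma to the monic polynomial $f(x) = x^j - 1 \in \rx$. Since the statement asks for a unique lift of a root in $F$ that satisfies a polynomial equation in $R$, the framework of Lemma \ref{thm:mc1} is tailor-made for this task: we only need to produce the right polynomial and verify that its hypotheses are met.

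First I would verify that $f(x) = x^j - 1$ is not a zero divisor in $\rx$. Since $f$ is monic, any relation $a \cdot f(x) = 0$ with $a \in R$ forces $a = 0$ by comparing leading coefficients; more generally, the same argument with leading coefficients shows $f$ cannot annihilate a non-zero polynomial, so $f$ is a non-zero-divisor in $\rx$. Next I would check that $\bar{f}$ has $s$ as a zero in $F$: by hypothesis $s$ has order $j$ in the cyclic group $F^*$, so $s^j = 1$ and hence $\bar{f}(s) = s^j - 1 = 0$.

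With both hypotheses of Lemma \ref{thm:mc1} in place, I would invoke it to obtain a unique element $r \in R$ such that $f(r) = 0$ (equivalently $r^j = 1$) and $\bar{r} = s$. This is exactly the statement of the proposition, so the proof is complete.

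The main obstacle, if any, is essentially bookkeeping: identifying the correct polynomial to feed into Lemma \ref{thm:mc1} and confirming the non-zero-divisor hypothesis. The condition $j \mid p^m - 1$, which was not used explicitly in my argument, is implicitly what makes the lifting well-behaved (it ensures $\gcd(j,p) = 1$, so $s$ is a simple root of $\bar{f}$), and it is presumably what allows the strong form of Lemma \ref{thm:mc1} to apply here without further checks.
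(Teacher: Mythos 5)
Your proof is correct and follows the same route as the paper: apply the Hensel-type Lemma \ref{thm:mc1} to the monic polynomial $x^j-1$. Note only that the paper's entire proof consists of the observation you defer to your final paragraph --- that $\gcd(j,p)=1$ forces $s$ to be a \emph{simple} root of $x^j-1$ in $F$ --- and this is the substantive hypothesis needed for unique lifting (as literally quoted the lemma omits simplicity, but without it the conclusion fails, e.g.\ for $x^2-p$ over $\Z_{p^2}$), so that check belongs in the body of the argument rather than as an afterthought.
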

\begin{proof}
   Since $\gcd(j,p)=1$, $x^{j}-1$ has only simple roots in $F$.  By \Cref{thm:mc1}, there exists a unique $r\in R$ such that $r^j=1$ and $\bar r=s$.  
\end{proof}
    Hence, the polynomial $x^j-1$ splits in $R$ if and only if it splits in  $F$.  \\

Since $x^{p^m-1}-1$ splits in $F$, the following Proposition is a consequence of  \Cref{lem:projss}.
\begin{proposizione}\label{prop:maxsubwc}
    Let $q=p^m-1$ and let $g\in R$ be a primitive $q$th root of unity. Then $g^i-g^j$ is a unit for all $0\le j< i \le q-1$.
\end{proposizione}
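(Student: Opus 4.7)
The plan is to reduce the problem to the residue field $F$ via Lemma \ref{lem:projss}, using Lemma \ref{thm:mc1} as a Hensel-type lifting device. Since $g^q = 1$, $g$ is a unit in $R$ (with inverse $g^{q-1}$), so each $g^j$ is a unit. The factorization
\[
g^i - g^j \;=\; g^j\bigl(g^{i-j} - 1\bigr)
\]
therefore reduces the claim to showing $g^k - 1 \in N(R)$ for every $1 \le k \le q-1$. By Lemma \ref{lem:projss}, this is equivalent to $\bar g^{\,k} \neq \bar 1$ in $F$ for every such $k$; in other words, it suffices to prove that $\bar g$ has multiplicative order exactly $q$ in $F^{\ast}$.

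To establish this, I would apply Lemma \ref{thm:mc1} to $f(x) = x^q - 1 \in R[x]$. Its leading and constant coefficients are units, so $f$ is not a zero-divisor in $R[x]$. Since $q = p^m - 1 = |F^\ast|$, the reduction $\bar f(x) = x^q - 1$ has precisely $q$ distinct roots in $F$, namely all of $F^\ast$. Lemma \ref{thm:mc1} then lifts each such root uniquely to a root of $f$ in $R$, so the set $\mu_q(R)$ of $q$-th roots of unity in $R$ has exactly $q$ elements, and reduction modulo the maximal ideal gives a bijection
\[
\mu_q(R) \;\longrightarrow\; F^{\ast},\qquad r \mapsto \bar r.
\]
Multiplicativity is clear, so this map is in fact a group isomorphism.

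Now the primitivity of $g$ says that $1, g, g^2, \ldots, g^{q-1}$ are pairwise distinct (otherwise some $g^{i-j}$ with $1 \le i-j \le q-1$ would equal $1$), hence they account for all of $\mu_q(R)$. Under the isomorphism above, $\bar g$ must then generate $F^{\ast}$, so its order in $F^{\ast}$ is exactly $q$. Consequently $\bar g^{\,i} \neq \bar g^{\,j}$ whenever $0 \le j < i \le q-1$, and Lemma \ref{lem:projss} delivers $g^i - g^j \in N(R)$.

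The argument is essentially bookkeeping once the lifting step is available; the only place where the specific ring-theoretic structure is used in a nontrivial way is the invocation of Lemma \ref{thm:mc1} to guarantee that the $q$ roots of $x^q - 1$ in $F$ all lift uniquely to $R$. Verifying the non-zero-divisor hypothesis there is immediate from the fact that both extreme coefficients of $x^q-1$ are units, so there is no serious obstacle.
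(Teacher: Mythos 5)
Your proof is correct and follows essentially the same route the paper intends: it uses the Hensel-type lifting (Lemma \ref{thm:mc1}, as packaged in Proposition \ref{thm:mc2}) to see that reduction modulo the maximal ideal is a bijection from the $q$-th roots of unity in $R$ onto $F^{\ast}$, so the powers of $g$ have pairwise distinct images in $F$, and then concludes with Lemma \ref{lem:projss}. The only cosmetic difference is your preliminary factorization $g^i-g^j=g^j(g^{i-j}-1)$, which the paper skips by applying Lemma \ref{lem:projss} directly to the distinct residues $\bar g^{\,i}\neq\bar g^{\,j}$.
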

Let  $G$  be the cyclic subgroup of $\nr$ whose elements are the roots of the polynomial $x^{p^m-1}-1\in \rx$. \Cref{prop:maxsubwc} implies
that $G$ is a subtractive subset in $\nr$. \Cref{lem:projss} implies that the size of any subtractive subset of $\nr$ cannot exceed $p^m-1$. A subtractive subset of  $\nr$ of  size $p^m-1$ is said to be a \emph{maximal subtractive subset}.  Thus, $G$ is a maximal subtractive subset of $R$.
\begin{proposizione}\label{prop:goodan}
    Let $H$ be a subgroup of the cyclic group $G$. The annihilator polynomial of the subgroup \begin{equation*}
        p(x)=\prod_{h\in H}(x-h)=x^{\mo H}-1 \ ,
    \end{equation*} is constant on the cosets of $H$.
\end{proposizione}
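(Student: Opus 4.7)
The plan is to establish both halves of the proposition by combining the well-conditioned-set machinery already set up (\Cref{prop:nradici} and \Cref{prop:maxsubwc}) with a Lagrange-type argument internal to the cyclic group $G$.

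First I would verify the identity $\prod_{h\in H}(x-h) = x^{|H|}-1$. Since $H$ is a subgroup of the cyclic group $G$ and $|G| = p^m - 1$ is prime to $p$, $H$ itself is cyclic of order $|H|$ dividing $p^m-1$, so every $h \in H$ satisfies $h^{|H|} = 1$ by Lagrange's theorem. Hence both polynomials
\[
p(x) = \prod_{h\in H}(x-h) \qquad \text{and} \qquad q(x) = x^{|H|} - 1
\]
are monic of degree $|H|$ and vanish identically on $H$. Their difference $p(x) - q(x)$ therefore has degree at most $|H|-1$ and vanishes on $H$. Now $H \subseteq G$, and by \Cref{prop:maxsubwc}, $G$ is subtractive, so any subset of $G$ is also subtractive, hence well-conditioned. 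Since $|H|$ roots sit in a well-conditioned set and the degree is at most $|H|-1$, \Cref{prop:nradici} forces $p(x) - q(x) = 0$, yielding the claimed factorization.

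For the constancy on cosets, I would then argue directly from $p(x) = x^{|H|} - 1$. Pick any coset $gH$ with representative $g \in G$. For an arbitrary element $y = gh \in gH$ with $h\in H$,
\[
p(y) = (gh)^{|H|} - 1 = g^{|H|} h^{|H|} - 1 = g^{|H|} - 1,
\]
which depends only on the coset $gH$ and not on the chosen element $y$. Hence $p$ is constant on each coset.

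I do not expect a significant obstacle: the only subtle point is justifying that the closed form $x^{|H|}-1$ really holds over the ring $R$ rather than just over the residue field, and this is precisely where the well-conditioning of $G$ (and hence of $H$) via \Cref{prop:maxsubwc} is essential, allowing us to invoke the polynomial uniqueness result of \Cref{prop:nradici} in place of the usual field-theoretic argument.
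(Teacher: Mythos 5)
Your proposal is correct, but it takes a genuinely different route from the paper. The paper proves constancy directly from the product form: for $a\bar h\in aH$ it factors $a\bar h-h=\bar h\,(a-h\bar h^{-1})$, pulls out $\bar h^{\mo H}=1$, and uses the fact that $h\mapsto h\bar h^{-1}$ permutes $H$ to conclude $p(a\bar h)=p(a)$. This is a purely group-theoretic reindexing argument that never needs the closed form $x^{\mo H}-1$, nor any root-counting over the ring. You instead first establish the identity $\prod_{h\in H}(x-h)=x^{\mo H}-1$ (via the degree bound on the difference of two monic polynomials vanishing on the well-conditioned set $H$, invoking \Cref{prop:maxsubwc} and \Cref{prop:nradici}) and then read off constancy from the closed form using $h^{\mo H}=1$. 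Both arguments are valid. Your version has the merit of actually justifying the factorization $p(x)=x^{\mo H}-1$, which the paper asserts in the statement but never proves; the paper's version is more elementary and self-contained, since it avoids the well-conditioned-set machinery entirely and, as written, establishes constancy on $aH$ for an arbitrary unit $a$ without reference to the polynomial identity. One small remark: your coset computation is stated for $g\in G$, but once the polynomial identity is in hand it works verbatim for any $g\in N(R)$, matching the generality of the paper's argument.
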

\begin{proof}
    Let $ a\bar h$, $\bar h\in H$ be two elements in the coset $aH$. \begin{equation*}
        p(a\bar h)= \prod_{h\in H}(a\bar h-h)=\bar h^{\mo{H}}\prod_{h\in H}(a-h\bar h^{-1})=\prod_{h\in H}(a-h)=p(a) \ . 
    \end{equation*}
\end{proof}
\begin{rem}
 We can choose $p(x)=x^{\mo H}$ instead of dealing with $p(x)=x^{\mo H}-1$.
\end{rem}

The annihilators of subgroups form a class of $\big(\mo{H}-1,(p^m-1)/\mo H\big)$-good polynomials that can be employed in constructing optimal codes. \\ Since the size of the subgroup $H$ divides the size of the group $G$, $p^m \equiv 1 \mod r+1$ and the length of the code is always a  multiple of $r+1$. It is worth highlighting that the sizes of the possible subgroups  and maximum size of a subtractive subset impose constraints on the parameters of the code.     \\

\begin{rem}
    Analogously to \cite[Proposition 4.3]{tamo}, by selecting two distinct subgroups of $G$ with coprime orders, we can construct locally recoverable codes with two disjoint recovery sets.
\end{rem}

\section{A generalized version of the previous construction}\label{sec:5}
\subsection{Algebra of good polynomials over finite chain rings}Let $R$ be a finite chain ring,  let $\gamma$ be the generator of the maximal ideal and let $s$ be its  nilpotency index. Let $A$ be a well-conditioned set of size $n$ and let $A=\bigcup_{i=1}^lA_i$ be a partition of $A$. Let
\begin{equation}\label{eq:algpoly}
    \fa=\{ f\in \rx \mid  f(A_i)=c_i \forall \ i\in\{1,\dots,l\}\ ,   \text{deg }f <\mo{A}  \} \end{equation} be the set of polynomials   over $R$  of degree less then $\mo{A}$ constant on blocks of the partition. 
    \begin{definizione}
        The \emph{annihilator polynomial} of $A$ is the monic polynomial of smallest degree $h$ such that $h(a)=0$ for all $a\in A$. 
    \end{definizione}
  We  endowed $\fa$ with the  multiplication modulo $h$: $$\fa\times \fa \to \fa \ , \quad (f,g)\mapsto fg \mod h \ . $$  We can observe that: \begin{itemize}
     \item  $\fa $  is a commutative  ring; 
     \item  $\fa$ is an $R$-module;
     \item The ring product is compatible with the module product, namely, the scalar multiplication is bilinear: $$ r\cdot (fg)= (r\cdot f)g=f(r\cdot g) \ .$$
 \end{itemize}
 Therefore $\fa$ with the usual addition and the multiplication modulo $h$ is a commutative algebra over $R$. We now investigate some properties of  $\fa$.
 \begin{proposizione}\label{prop:propal} The following holds for $\fa$:
     \begin{enumerate}
         \item If $f\in\fa$ is a non-constant polynomial then $\max_i \mo{A_i}\le \deg f <\mo A$;
         \item $\fa$ is a free algebra of dimension $l$, namely, $\fa$ is a free $R$-module with basis $\{f_1,\dots,f_l\}$ with $f_i(A_j)=\delta_{i,j}$ and deg $f_i<\mo A$ (where $\delta_{i,j}$ is the Kroneker delta). Explicitly, $$f_i(x)=\sum_{a\in A_i} \prod_{b\in A\setminus\{ a\}} \frac{x-b}{a-b} \ ; $$
         \item Let $\{c_1,\dots, c_l\}$ be a well-conditioned set in $R$ and let $g$ be the polynomial of degree less than $\mo A$ satisfying $g(A_i)=c_i$ for all $1\le i \le l$, i.e. , $$g(x)=\sum_{i=1}^lc_i\sum_{a\in A_i}\prod_{b\in A\setminus\{ a\}} \frac{x-a}{b-a} \ .$$ Then the polynomials $\{1,g,\ldots,g^{l-1}\}$ form a basis for $\fa$.
     \end{enumerate}
 \end{proposizione}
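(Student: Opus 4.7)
The plan is to prove the three parts in order, using the polynomial reconstruction tools (Proposition \ref{prop:nradici} and its corollary) as the workhorse.

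For part (1), let $f \in \fa$ be non-constant with $f(A_i) = c_i$. For each index $j$, the polynomial $f - c_j$ vanishes identically on $A_j$, providing $|A_j|$ roots inside the well-conditioned set $A_j \subseteq A$. If $\deg(f - c_j) < |A_j|$, Proposition \ref{prop:nradici} forces $f = c_j$, contradicting non-constancy. Since $\deg(f - c_j) = \deg f$ whenever $f$ is non-constant, this gives $\deg f \ge |A_j|$ for every $j$, hence $\deg f \ge \max_j |A_j|$. The upper bound $\deg f < |A|$ is built into the definition of $\fa$.

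For part (2), I first observe that the claimed $f_i$ is a well-defined element of $\rx$ because every denominator $a-b$ with $a,b \in A$, $a\ne b$, is a unit (all pairwise differences in a well-conditioned set are units, as can be checked case by case against the definition). Direct substitution shows $f_i(a') = 1$ when $a' \in A_i$ (only the summand $a=a'$ survives and equals $1$) and $f_i(a') = 0$ when $a' \in A_j$ with $j \ne i$ (every summand contains a vanishing factor $(a'-a')/(a-a')$), so $f_i \in \fa$ and $\deg f_i \le |A|-1$. Linear independence follows by evaluating $\sum_i r_i f_i = 0$ at any point of $A_j$, which yields $r_j = 0$. For spanning, if $f \in \fa$ with $f(A_i) = c_i$, then $f$ and $\sum_i c_i f_i$ both lie in $\fa$, both have degree less than $|A|$, and both take the value $c_i$ on $A_i$; the uniqueness corollary after Proposition \ref{prop:nradici} forces them to coincide.

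For part (3), note that $g^k$ (viewed modulo $h$) evaluates to $c_i^k$ on $A_i$, since $h$ vanishes on $A$; in particular $g^k \in \fa$. A relation $\sum_{k=0}^{l-1} r_k g^k = 0$ in $\fa$ means agreement on $A$, which by evaluation at representatives of the $A_i$ gives the system $V \mathbf{r} = 0$, where $V = (c_i^{\,k})_{1 \le i \le l,\, 0 \le k \le l-1}$ is a Vandermonde matrix with determinant $\prod_{i<j}(c_j - c_i)$. Analogously, for any $f \in \fa$ with $f(A_i) = b_i$, the relation $\sum_k r_k g^k = f$ in $\fa$ is equivalent (via part (2)) to $V \mathbf{r} = \mathbf{b}$. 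The key point is therefore that $\det V$ is a unit of $R$: by the definition of well-conditioned, every difference $c_j - c_i$ with $i \ne j$ is a unit, either because both $c_i,c_j$ are subtractive units, or because one of them (say $c_k$) is zero (giving $c_j - c_k = c_j$, a unit) or is a zero-divisor in a local ring (so $c_k$ lies in the maximal ideal and $c_j - c_k$ is a unit minus a non-unit, hence a unit). Invertibility of $V$ over $R$ then yields both the uniqueness of $\mathbf{r}$ (linear independence) and its existence (spanning).

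The only delicate step is verifying that the Vandermonde determinant in part (3) is a unit; once the unit-difference property of well-conditioned sets is isolated, the remaining arguments reduce to standard linear interpolation lifted to modules over finite chain rings.
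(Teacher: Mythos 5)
Your proposal is correct and follows essentially the same route as the paper: root-counting in the well-conditioned set for (1), the Lagrange-type idempotents $f_i$ for (2), and the Vandermonde system in the values $c_i$ for (3). The only difference is cosmetic — you prove spanning in (3) by solving $V\mathbf{r}=\mathbf{b}$ directly and you spell out why the pairwise differences $c_i-c_j$ (hence $\det V$) are units, details the paper leaves implicit by instead invoking that $l$ independent elements of the rank-$l$ free module $\fa$ must generate it.
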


 \begin{proof}
     \begin{enumerate}
         \item Let $c\coloneqq f(A_i)$. The polynomial $f(x)-c$ has at least $\mo{A_i}$ roots in the well-conditioned set $A$. Therefore  $\deg f\ge \mo{A_i}$ for all $1\le i\le l$, and hence the claim.
         \item Since $A$ is a well-conditioned set,  the polynomials $f_i(x)$ are well-defined for all $i$. By definition, the set of polynomials $\{f_1,\dots,f_l\}$ generate $\fa$. Moreover the $f_i$s are linearly independent: if $\sum_{i=0}^l\lambda_if_i(x)=0$ then \\$\sum_{i=0}^l\lambda_if_i(A_j)=\sum_{i=0}^l\lambda_i\delta_{i,j}=\lambda_j=0$. Therefore, $\{f_1,\dots,f_l\}$  form a basis for $\fa$.
         \item 
         If $
             \sum_{j=1}^l b_jg(x)^{j-1}=0 \  
        $ implies $b_j=0$ for all $1\le j \le l$ then $1,g,\dots,g^{l-1}$ are linearly independent.  Notice that the equation $ 
             \sum_{j=1}^l b_jg(x)^{j-1}=0 \ , 
        $  is equivalent to the system $$V(b_1,\dots,b_l)^\top=0 \ ,$$ where $V=(g^{j-1}(A_i))_{1\le i,j\le l}$ is a Vandermonde matrix. Since  $\{c_1,\dots,c_l\}$ is a well-conditioned set  in $R$, $ \det V=\prod_{i\ne j}(c_i-c_j)$  is a unit in $R$, hence $V$ is of full rank and $V(b_1,\dots,b_l)^\top=0$ if and only if $b_j=0$ for all $j$.
          Since $\fa$ has dimension $l$,  $\{1,g,\dots,g^{l-1}\}$ generate $\fa$. 
     \end{enumerate}
 \end{proof}

 In the following, we will focus on  the algebra of \good polynomials $\fa$ where $A$  is partitioned into blocks of size $ r+1$. 

  \begin{rem}
If  $g\in\fa$  is monic  polynomial of degree $r+1$, then $g$ always takes different values on the blocks of the partition of $A$. Otherwise, for some constant $c\in R$ the non-zero polynomial $g(x)-c$ would have $2(r+1)$ roots in a well-conditioned set. Moreover if $g$ takes values $c_1,\dots,c_l$  on $A_1,\dots,A_l$  respectively, then  $\{c_1,\dots,c_l\}$ is a subtractive subset of $R$. By contradiction, assume  $g(A_i)=c_i$ and $g(A_j)={c_i+\lambda \gamma}$ for some $\lambda\in R$. Then the polynomial $h(x)=\gamma^{s-1}\big(g(x)-c_i\big)$ is a  non-zero polynomial of degree  $r+1$ having at least $2(r+1)$ roots in a well-conditioned set, which leads to a  contradiction. From  \Cref{prop:propal}(3) follows that the algebra $\fa$ is generated by the powers of  $g$. Hence, if $g$ is the good  polynomial  introduced in \Cref{teo:tbcodes} then its powers span the algebra of \good polynomials.  \end{rem}

\subsection{A family of Locally Recoverable Codes} 
Let $A$ be a well-conditioned set of size $n$ and let $A=\bigcup_{i=1}^l A_i $ be a partition of $A$ into $l$ subssets of size $r+1$. Let 
$$\fa^r=\bigoplus_{i=0}^{r-1} \fa x^i \ .$$ Note that $\fa^r$, being the direct sum of algebras of dimension $l$, has dimension $lr$.\\
The idea behind the next construction is to associate  in a injective way the messages $a\in R^K$ to  polynomials $f_a(x)\in \fa^r$, and then evaluate $f_a $ in the points of $A$.

\begin{teorema}\label{thm:tbgeneralizzata}
     Let $A_1,\dots,A_l$ be subsets of $R$ such that $A=\bigcup_{i=1}^l A_i$ is well-conditioned. Let $r\ge 1$ and assume there exists a polynomial $g\in \fa$ of degree $r+1$ whose powers span $\fa$. Let  $K=rt$ and let  $$\Phi\colon R^K\to \fa^r , \quad a\mapsto f_a(x) \ , $$ be an injective map. If we define the code  as    \begin{equation*}
   \C=\bigg\{ (f_a(\alpha), \ \alpha\in A) \mid a\in R^K\bigg\} \ , 
    \end{equation*}
    then $\C$ is a free $(n,K,r)$-code with minimum distance  $$d\ge n-\max_{a,b\in R^K} \deg (f_a-f_b)\ge n-\max_{a\in R^K} \deg f_a \ .$$
\end{teorema}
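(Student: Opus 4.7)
The plan is to adapt the proof of \Cref{teo:tbcodes} to the abstract setting, verifying in turn that $\C$ has type $K$ and is free, that it has locality $r$, and that the claimed distance bound holds.

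First I would study the encoding map $R^K\to R^n$, $a\mapsto(f_a(\alpha))_{\alpha\in A}$, as the composition of $\Phi$ with the $R$-linear evaluation on $\fa^r$. Both factors are $R$-linear and $\Phi$ is injective by hypothesis, so it suffices to argue that evaluation is injective on $\mathrm{Im}(\Phi)$. In the regime in which the stated distance bound is non-vacuous one has $\deg(f_a-f_b)<n$ for all $a\neq b$, so if $f_a-f_b$ vanished on the well-conditioned set $A$ of size $n$, then \Cref{prop:nradici} would force $f_a=f_b$, contradicting the injectivity of $\Phi$. The encoding is therefore a monomorphism from $R^K$ into $R^n$ whose image is $\C$, and hence $\C$ is an $R$-submodule of $R^n$ free of rank $K$ with $|\C|=q^K$.

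Next, for locality, I invoke the defining direct-sum decomposition $\fa^r=\bigoplus_{i=0}^{r-1}\fa x^i$, which writes every $f_a$ uniquely as $f_a(x)=\sum_{i=0}^{r-1}\tilde f_i(x)\,x^i$ with $\tilde f_i\in\fa$. Since each $\tilde f_i$ is constant on every block $A_j$ with value $c_{i,j}$, the restriction of $f_a$ to $A_j$ coincides with the polynomial $\delta_j(x)=\sum_{i=0}^{r-1}c_{i,j}\,x^i$ of degree at most $r-1$. Because $A_j$ is a subtractive subset of $R$ of size $r+1$, polynomial interpolation on $A_j$ is well-posed by the corollary following \Cref{prop:nradici}: $\delta_j$ is uniquely determined by any $r$ of its values on $A_j$. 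Therefore an erased coordinate $\alpha\in A_j$ is recovered as $f_a(\alpha)=\delta_j(\alpha)$ from the $r$ remaining symbols on $A_j\setminus\{\alpha\}$, which exhibits $A_j\setminus\{\alpha\}$ as a recovering set of size at most $r$.

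For the distance bound, given two codewords coming from $a\neq b$, set $d'=\deg(f_a-f_b)$ and assume $d'<n$. I would then check that a nonzero polynomial of degree $d'$ has at most $d'$ roots in $A$: any $d'+1$ such roots would form a well-conditioned $(d'+1)$-subset of $A$ (subtractivity passes to subsets), on which \Cref{prop:nradici} would again force $f_a-f_b=0$. Hence $f_a$ and $f_b$ agree on at most $d'$ coordinates, so the Hamming distance between the two codewords is at least $n-d'$; minimizing over pairs yields $d\geq n-\max_{a,b}\deg(f_a-f_b)$, and the elementary estimate $\deg(f_a-f_b)\leq\max(\deg f_a,\deg f_b)\leq\max_a\deg f_a$ furnishes the second, weaker inequality. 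The main obstacle is precisely this passage from degree $d'$ to at most $d'$ roots in $A$, because \Cref{prop:nradici} is stated only for polynomials of degree at most $n-1$ on a well-conditioned set of exact size $n$; one has to restrict to a $(d'+1)$-subset and invoke inheritance of subtractivity. Once this point is handled, the rest is a formal translation of the concrete construction in \Cref{teo:tbcodes} into the language of the algebra $\fa$ and the module $\fa^r$.
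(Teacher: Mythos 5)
Your proposal follows the same route as the paper's proof, which explicitly states that it ``essentially repeats'' the proof of \Cref{teo:tbcodes} and only writes out the degree computation. Your treatment of locality (the direct-sum decomposition of $\fa^r$, constancy of the $\tilde f_i$ on the blocks, and interpolation of a degree-$(r-1)$ polynomial from $r$ values on the subtractive set $A_j$) and of the root count (restricting to a $(d'+1)$-subset of $A$, noting that well-conditionedness is inherited by subsets, and applying \Cref{prop:nradici}) is exactly what the earlier proof does; your attention to the exact hypotheses of \Cref{prop:nradici} is sound.

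The one point you should not hedge on is the degree bound. You write that ``in the regime in which the stated distance bound is non-vacuous one has $\deg(f_a-f_b)<n$,'' but the claim that $\C$ is free of rank $K$ requires the encoding to be injective, and your injectivity argument needs $\deg(f_a-f_b)<n$ unconditionally. This is precisely the computation the paper carries out, and it follows from the hypotheses: since the powers of $g$ span the free $R$-module $\fa$ of rank $l$ and $\deg g=r+1$, every component $\tilde f_i\in\fa$ is an $R$-combination of $1,g,\dots,g^{l-1}$ and hence has degree at most $(r+1)(l-1)$, so
$$\deg f_a\le (r+1)(l-1)+(r-1)=n-2<n$$
for every $a$. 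Note that the bare definition of $\fa$ only guarantees $\deg \tilde f_i<n$, which would allow $\deg f_a$ up to $n+r-2$; the hypothesis on $g$ is what makes the bound, and therefore the injectivity and freeness claims, work. With that computation inserted in place of the hedge, your argument is complete and coincides with the paper's.
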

\begin{proof}
    In order to determine the parameters of the code, we essentially repeat the proof of \Cref{teo:tbcodes}. We  explicitly determine the bound on the minimum distance. For a given message vector $a\in R^K$ the encoding polynomial reads $$f_a(x)=\sum_{i=0}^{r-1} f_i(x)x^i\ ,$$ with $f_i(x)\in \fa$. Since $\{1,g,\dots g^{l-1}\}$ is a basis for $\fa$, \begin{align*}
        \deg f_a&\le (r+1)(l-1)+(r-1)\le rl+l-r-1+r-1=\\&=\frac{nr}{r+1}+\frac{n}{r+1}-2=n-2<n \ .
    \end{align*}
Let $c_a=(f_a(\alpha), \alpha\in A)$ and $c_b=(f_b(\alpha), \alpha\in A)$ be two codewords corresponding to the distinct message vectors $a$ and $b$.
Since $\Phi$ is injective and $\deg (f_a-f_b)<n$,  $c_a$ and $c_b$ are distinct and the claim follows.
\end{proof}
Notice that the recovering procedure follows the same steps of  Construction \ref{teo:tbcodes}.
\section{Removing  the constraints on code length}\label{sec:6}

\subsection{Codes  over well-conditioned sets with arbitrary length} Constructions in \Cref{teo:tbcodes}, \ref{thm:tbgeneralizzata} and \ref{thm:multiblocks} require the assumption that  $ r+1$ divides  $n$. We provide a different construction in order to relax this condition.  Let $h_A(x)=\prod_{a\in A} (x-a)$ be the annihilator polynomial of the set $A$. For  simplicity assume $r\mid K+1$ (this constraint can be easily lift, see \Cref{rem:rdivk} ). \\


Let  $g\in \fa$  be a polynomial of degree $r+1$ whose powers span $\fa$. Without loss of generality we may assume that $g$ vanishes on $A_l$, otherwise one can consider the powers of the polynomial $g(x)-g(A_l)$.

\begin{teorema}\label{thm:almopt}
    Let $A=\bigcup_{i=1}^l A_i$ be a well-conditioned set  with  $\mo{A_i}=r+1$ for all $1\le i \le l-1$ and  $\mo{A_l}=m<r+1$. Let $g(x)$ be a polynomial of degree $r+1$ whose powers span $\fa$. 
    Let $a=(a_0,\dots,a_{r-1})\in R^K$ be the message vector with $a_i\in R^{\frac{K+1}{r}}$ for  $i\ne m-1$ and $a_{m-1}\in R^{\frac{K+1}{r}-1}$. We define the encoding polynomial as $$f_a(x)=\sum_{i=0}^{m-2}\sum_{j=0}^{\frac{K+1}{r}-1}a_{i,j}g(x)^jx^i+\sum_{j=1}^{\frac{K+1}{r}-1}a_{m-1,j}g(x)^jx^{m-1}+\sum_{i=m}^{r-1}\sum_{j=0}^{\frac{K+1}{r}-1}a_{i,j}g(x)^jh_{A_l}(x) \ .$$ Let\begin{equation*}
   \C=\bigg\{ (f_a(\alpha), \ \alpha\in A \mid a\in R^K\bigg\} \ .
    \end{equation*}
    Then $\C$ is a free $(n,K,r)$-LRC code with minimum distance \begin{equation*}
        d\ge n-K-\bigg\lceil\frac{K}{r}\bigg\rceil+1
    \end{equation*}
\end{teorema}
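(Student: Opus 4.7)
My plan is to follow the three-step scheme of Theorem~\ref{teo:tbcodes}: (i) bound the degree of $f_a$ and combine it with well-conditionedness of $A$ (via Proposition~\ref{prop:nradici}) to obtain both injectivity of the encoding and the minimum distance; (ii) conclude $|\C| = q^K$, which together with the generalized Singleton bound (Theorem~\ref{generalizedSing}) forces the code to be free of rank $K$; (iii) verify locality $r$ by restricting $f_a$ to each block.

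The central computation is the degree of $f_a$. Setting $t = (K+1)/r$, which equals $\lceil K/r \rceil$ since $r \mid K+1$, I would perform a term-by-term analysis of the three summands. Using $\deg g = r+1$ and $\deg h_{A_l} = m$, every monomial appearing with coefficient $a_{i,j}$ has degree at most $(r+1)(t-1) + (r-1) = K + \lceil K/r \rceil - 1$, and distinct pairs $(i,j)$ contribute monomials of pairwise distinct degrees (this is the point of splitting the sum into three parts: the $h_{A_l}$ factor in the third summand replaces the high powers $x^i$ for $i \ge m$ without creating degree collisions with the first two). By the rate inequality~\eqref{eq:rate} this upper bound is strictly less than $n$. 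For $a \ne b$, $f_a - f_b$ is a nonzero polynomial of degree less than $n$, so Proposition~\ref{prop:nradici} applied to the well-conditioned set $A$ ensures at most $K + \lceil K/r \rceil - 1$ zeros, giving $d \ge n - K - \lceil K/r \rceil + 1$. The same distinct-degree property shows that $a \mapsto f_a$ is injective, so distinct messages produce distinct codewords and $|\C| = q^K$; comparing with Theorem~\ref{generalizedSing} then forces the code to be free of rank $K$.

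For locality I exploit the structural properties of $g$ and $h_{A_l}$: since $g \in \fa$, it is constant on each $A_i$, taking some value $c_i$, and by the WLOG normalization $g$ vanishes on $A_l$; the annihilator $h_{A_l}$ also vanishes on $A_l$. Restricted to $A_i$ with $i < l$, each $g(x)^j$ becomes the scalar $c_i^j$, and a careful tracking of the $x$-degrees in the three summands shows that $f_a|_{A_i}$ is a polynomial in $x$ of degree at most $r-1$ (the $h_{A_l}$-weighted third summand has degree $m$ in $x$, augmented by factors of $x$ of degree at most $r-1-m$, staying within $r-1$). Since $A_i$ is a subtractive subset of size $r+1$, any missing coordinate is interpolated from the remaining $r$ symbols. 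On $A_l$, all summands involving $g^j$ with $j \ge 1$ and the factor $h_{A_l}$ vanish, leaving $\sum_{i=0}^{m-2} a_{i,0} x^i$, a polynomial of degree at most $m - 2$ determined by any $m - 1 \le r$ of the $m$ values on $A_l$. Hence every coordinate has locality at most $r$.

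The main obstacle is the simultaneous balancing act of the degree book-keeping across the three summands: the third term must be structured so that different $a_{i,j}$'s give contributions of distinct degrees (for injectivity), the total degree of $f_a$ stays within $K + \lceil K/r \rceil - 1$ (for the distance bound), and the restriction to each block of size $r+1$ drops to degree at most $r-1$ (for locality). Once this is checked, the rest of the argument is a direct transcription of the proof of Theorem~\ref{teo:tbcodes}.
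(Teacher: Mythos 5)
Your proposal follows the same route as the paper: it is a transcription of Theorem~\ref{teo:tbcodes}, with the degree computation $(r+1)\bigl(\tfrac{K+1}{r}-1\bigr)+(r-1)=K+\lceil K/r\rceil-1$ feeding Proposition~\ref{prop:nradici} to get the distance bound, and the locality check split into the full blocks $A_i$ ($i<l$), where the restriction of $f_a$ has degree at most $r-1$, and the short block $A_l$, where it has degree at most $m-2$. Your degree bookkeeping is correct (and cleaner than the paper's displayed computation, which contains an arithmetic slip), and your reading of the third summand as carrying an extra factor $x^{i-m}$ alongside $h_{A_l}$ is the right one: as literally printed, the terms with $i\ge m$ and equal $j$ would coincide, destroying injectivity, and the restriction to $A_i$ would have degree $m\le r$, too large for recovery from $r$ symbols.

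The one step that does not go through as stated is your deduction of freeness. In Theorem~\ref{teo:tbcodes} the code is free because the lower bound on $d$ coming from the degree count meets the LRC bound \eqref{bd:locfcr} exactly, which squeezes the rank down to $K$. Here the lower bound is $n-K-\lceil K/r\rceil+1$, one less than \eqref{bd:locfcr}, so comparing $|\C|=q^K$ with the generalized Singleton bound (or with \eqref{bd:locfcr}) only yields $\overline{K}\le K+\lceil K/r\rceil$ and does not force $\overline{K}=K$. The repair is immediate from what you have already established: $a\mapsto (f_a(\alpha))_{\alpha\in A}$ is an injective $R$-linear map, so $\C\cong R^K$ as an $R$-module and is therefore free of rank $K$, with no distance argument required.
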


\begin{proof}
    The degree of the encoding polynomial $f_a(x)$ is at most $$\bigg(\frac{K+1}{r}-1\bigg)(r+1)+r-1=K+1+\frac{K+1}{r}-1+r-1\le K+\bigg\lceil\frac{K}{r}\bigg\rceil +1 \ .$$ Since the encoding is linear, the bound on the minimum distance follows.\\
    For any position $\alpha \in \bigcup_{i=1}^{l-1}A_i$, the recovery procedure  follows the same steps of \ref{thm:tbgeneralizzata}. Indeed $f_a(x)\in \bigoplus_{i=0}^{r-1}\fa x^i$, and hence, any symbol can be recovered by accessing  $r$ symbols. The only specific situation worth examining is when the symbol $\alpha$ to be recovered belongs to $A_l$. It is essential to note that the polynomial $f_a(x)$ restricted to $A_l$ has degree at most $m-2$. Therefore, in order to recover the value of $f_a(\alpha), \ \alpha\in A_m$, we  find the decoding polynomial $\delta(x)=\sum_{i=0}^{m-2} f_i(\alpha)x^i$. $\delta(x)$ is obtained  from the set of $ m-1$ values $f_a(\beta)=\delta(\beta), \ \beta\in A_m\setminus\{\alpha \}\ .$  Finally we compute $f_a(\alpha)=\delta(\alpha)$.
\end{proof}

Note that the  minimum of the code $\C$ in \Cref{thm:almopt} distance is at most one less the maximum possible value. 

\subsection{LRC codes from MDS codes with arbitrary parameters}
In the following, we will construct a code such that its symbols can be partitioned into $t$ MDS codes $C_i$ of length $n_i$ and rank $K_i$.

\begin{definizione}
    Let $C$ be a code whose coordinates are partitioned into $l$ sets $A_i$ of size $n_i$. Let $C_i$ be the code restricted to the coordinates in $A_i$. The code $C$ has $(r,\rho)$-\emph{locality} if   for all $1\le i \le l$ we have 
    \begin{itemize}
        \item  $n_i\le r+\rho-1$;
        \item $d_{C_i}\ge \rho$.  
       
    \end{itemize}
   From the Singleton bound it follows that  the rank of $C_i$ is at most $r$.
\end{definizione}

On the same line of  \cite[Theorem 2.1]{kamath}, it is possible to improve bound \ref{bd:locfcr}.
\begin{teorema}
    Let $R$ be a finite chain ring and let $C$ be a linear code of length $n$, rank $K$ and with $(r,\rho)$-locality. Then 
    \begin{equation}\label{bd:rrholoc}
        d\le n-K+1-\bigg(\bigg\lceil\frac{K}{r}\bigg\rceil-1    \bigg) (\rho-1) \ .\end{equation}

\end{teorema}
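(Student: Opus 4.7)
The plan is to adapt the greedy punctured-set argument from the proof of \Cref{thm:lrcbdgen} to the block structure arising from $(r,\rho)$-locality. First I would record the basic observation that every local code $C_i$ has rank at most $r$: applying \Cref{generalizedSing} to $C_i$ together with the hypotheses $n_i\le r+\rho-1$ and $d_{C_i}\ge\rho$ yields $\mathrm{rk}(C_i)\le n_i-d_{C_i}+1\le r$. Each block $A_i$ is therefore a ``cheap'' source of coordinates: it contributes at most $r$ to the rank of $C_T$ while contributing up to $r+\rho-1$ to $|T|$, a slack of $\rho-1$ per block.

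Next I would construct a set $T\subseteq\{1,\dots,n\}$ that fails to be an information set while having $|T|$ as large as possible, in order to invoke \Cref{cor:singbd}. Set $t=\lceil K/r\rceil$, so that $(t-1)r\le K-1<tr$. Proceeding greedily, I would select $t-1$ pairwise disjoint blocks $A_{i_1},\dots,A_{i_{t-1}}$, at each step taking a block whose inclusion strictly increases $\mathrm{rk}(C_T)$; this is possible so long as the current rank is strictly less than $K$. After these $t-1$ blocks one has $\mathrm{rk}(C_T)\le(t-1)r\le K-1$ and $|T|=\sum_j n_{i_j}$, which in the extremal case equals $(t-1)(r+\rho-1)$. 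If the rank is still strictly below $K-1$, I would then pad $T$ with individual coordinates, each chosen to lift the rank by exactly $1$, until $\mathrm{rk}(C_T)=K-1$.

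A short accounting then delivers the size bound. In the extremal configuration one obtains
\[
|T|=(t-1)(r+\rho-1)+\bigl(K-1-(t-1)r\bigr)=K-1+(t-1)(\rho-1).
\]
Since $\mathrm{rk}(C_T)=K-1<K$, the set $T$ is not an information set, so \Cref{cor:singbd} yields
\[
d\le n-|T|=n-K+1-\bigg(\bigg\lceil\frac{K}{r}\bigg\rceil-1\bigg)(\rho-1),
\]
as desired.

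The main obstacle is the greedy step: one must guarantee that at every stage a block can be chosen contributing the maximal rank $r$, and that the sizes $n_{i_j}$ are close enough to $r+\rho-1$ for the accounting to saturate the bound. When a block contributes less than $r$ to the rank or is strictly smaller than $r+\rho-1$, the bookkeeping requires the same kind of case analysis used for Algorithm~\ref{algo:maxs} in the proof of \Cref{thm:lrcbdgen}, and one has to verify that the less-favourable configurations only strengthen the upper bound on $d$. Modulo this technicality, the argument is a direct extension of the Gopalan-style bound (\Cref{thm:lrcbdgen}), now enriched by the $\rho-1$ coordinates per block that are saved thanks to the local erasure-correction capability.
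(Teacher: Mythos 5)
The paper does not actually prove this theorem; it only points to \cite[Theorem~2.1]{kamath}, and the argument there is precisely the greedy block-selection you describe, so your overall strategy is the intended one. However, as written your accounting only covers ``the extremal configuration'' in which every selected block has size exactly $r+\rho-1$ and contributes rank exactly $r$, and you defer the general case to an unspecified case analysis while asserting that less favourable configurations ``only strengthen the bound.'' That assertion is exactly what needs proof, and its direction is not obvious: a block with $n_{i_j}<r+\rho-1$ contributes \emph{fewer} coordinates to $T$, which a priori weakens the lower bound on $|T|$ that you feed into \Cref{cor:singbd}.

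The missing step is a uniform per-block inequality, which fortunately follows from the same computation you already perform in your first paragraph. For each block, \Cref{generalizedSing} applied to $C_i$ gives $K_i\le n_i-d_{C_i}+1\le n_i-\rho+1$, i.e.\ $n_i-K_i\ge\rho-1$; and the rank increase $\delta_j$ caused by adjoining $A_{i_j}$ to the current $T$ satisfies $\delta_j\le K_{i_j}$, so every selected block obeys $n_{i_j}-\delta_j\ge\rho-1$ regardless of its actual size or rank. With $t=\lceil K/r\rceil$, the greedy choice of $t-1$ blocks each strictly increasing the rank leaves $\mathrm{rk}(C_T)=\sum_j\delta_j\le(t-1)r\le K-1$, and after padding with $K-1-\sum_j\delta_j$ single coordinates one gets, in \emph{all} cases,
\begin{equation*}
|T|=\sum_{j=1}^{t-1}n_{i_j}+\Bigl(K-1-\sum_{j=1}^{t-1}\delta_j\Bigr)=K-1+\sum_{j=1}^{t-1}\bigl(n_{i_j}-\delta_j\bigr)\ge K-1+(t-1)(\rho-1)\ ,
\end{equation*}
after which \Cref{cor:singbd} gives the claimed bound. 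Inserting this two-line computation in place of the ``extremal configuration'' calculation closes the gap; the rest of your argument (the existence of $t-1$ rank-increasing blocks while the rank stays below $K$, and the final application of \Cref{cor:singbd}) is sound.
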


\begin{teorema}
Let $r\ge 1$ and let  $A=\bigcup_{i=1}^l A_i$ be a partition of  the well-conditioned set $A$ into $l$ subsets with $\mo{A_i}=r+\rho-1$ for all $1\le i \le l$.  Let $g(x)\in\rx$ be an $(r+\rho-1,l)$-good polynomial on the blocks of the partition of $A$.   For $r\mid K$ (this constraint can be lifted, see  \Cref{rem:rdivk}), let  $a=(a_0,\dots,a_{r-1})\in R^K$ be the message vector with $a_i\in R^{\frac{K}{r}}$ for all  $1\le i\le l$. We define 
    \begin{equation}
        f_a(x)=\sum_{i=0}^{r-1}\sum_{j=0}^{\frac{K}{r}-1} a_{i,j}g(x)^jx^i \ ;
    \end{equation}
    and \begin{equation*}
   \C=\bigg\{ (f_a(\alpha), \ \alpha\in A \mid a\in R^K\bigg\} \ .
    \end{equation*}
    Then $\C$ is a  free code with $(r,\rho)$-locality and rank $K$. Moreover  $C$ is  an optimal $(r,\rho)$-LRC code.
\end{teorema}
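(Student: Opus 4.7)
My plan is to follow the pattern of the proof of \Cref{teo:tbcodes}, splitting the argument into the same three parts: injectivity/freeness, $(r,\rho)$-locality, and minimum distance, with the degree book-keeping adjusted to reflect that now the good polynomial $g$ has degree $r+\rho-1$ and the message has $K/r$ coefficients per ``layer''.

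For the freeness and rank, I would argue exactly as in \Cref{teo:tbcodes}: since $g$ is monic of degree $r+\rho-1$, the $K$ polynomials $\{g(x)^j x^i : 0\le i\le r-1,\ 0\le j\le K/r-1\}$ all have distinct degrees, so the map $a\mapsto f_a$ is an injective $R$-module homomorphism into $R[x]$. The degree of $f_a$ is at most $(K/r-1)(r+\rho-1)+(r-1)=K+K(\rho-1)/r-\rho$, which is strictly less than $n=l(r+\rho-1)$ (using that $t=K/r\le l$). Since $A$ is well-conditioned, \Cref{prop:nradici} ensures that distinct polynomials of degree $<n$ produce distinct evaluation vectors on $A$, so $|\C|=q^K$ and $\C$ is free of rank $K$.

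For the $(r,\rho)$-locality I would fix a block $A_j$ and rewrite $f_a$ restricted to $A_j$ as $\sum_{i=0}^{r-1} F_{i}(A_j)\,x^{i}$, where $F_i(x)=\sum_{j'=0}^{K/r-1}a_{i,j'}g(x)^{j'}$. Because $g$ is constant on $A_j$, each $F_i$ is constant on $A_j$, so $f_a|_{A_j}$ is the evaluation on the well-conditioned set $A_j$ (of size $r+\rho-1$) of a polynomial of degree at most $r-1$. This local code is therefore a Reed--Solomon-type code over the well-conditioned set $A_j$: its rank is at most $r$ and, by \Cref{prop:nradici} applied inside $A_j$, its minimum distance is at least $(r+\rho-1)-(r-1)=\rho$. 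Hence the two conditions in the definition of $(r,\rho)$-locality are satisfied.

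For the minimum distance, linearity and the degree bound computed above give
\begin{equation*}
d\ge n-\max_{a\in R^K}\deg f_a \ge n-K-\tfrac{K}{r}(\rho-1)+\rho-1+1=n-K+1-\bigl(\tfrac{K}{r}-1\bigr)(\rho-1),
\end{equation*}
which matches the $(r,\rho)$-LRC bound \eqref{bd:rrholoc} with $r\mid K$. Combining the three pieces yields an optimal $(r,\rho)$-LRC code.

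The only step I expect to require any care is the local-distance claim: I must be sure that the hypothesis that $A$ is well-conditioned passes to each sub-block $A_j$ so that \Cref{prop:nradici} applies to the local codes. This is immediate from the definition of well-conditioned subset when $A_j\subseteq N(R)$ is itself subtractive, but if the well-conditioned condition involves a zero or a zero-divisor that sits in some $A_j$, I would argue that the local code on $A_j$ still behaves as an MDR code by reducing the problem to the subtractive subset $A_j\setminus\{0\}$ together with the evaluation at $0$, exactly as in \Cref{prop:nradici}. All other pieces are direct adaptations of the computations already carried out in \Cref{teo:tbcodes} and \Cref{thm:tbgeneralizzata}.
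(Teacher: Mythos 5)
Your proof is correct and follows essentially the same route as the paper's: the paper likewise reduces locality to the observation that $f_a$ restricted to each block $A_i$ has degree at most $r-1$ on a set of $r+\rho-1$ points, and obtains optimality from the degree bound $\deg f_a\le(\frac{K}{r}-1)(r+\rho-1)+r-1$ matched against bound \eqref{bd:rrholoc}. Your version merely fills in details the paper leaves implicit (freeness, the condition $d_{\C_i}\ge\rho$, and the fact that subsets of well-conditioned sets remain well-conditioned), all of which check out.
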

\begin{proof}
    Suppose there is an erased symbol $f_a(\alpha)$ for some $\alpha\in A_i$. The restriction of $f_a$ to $A_i$ is a polynomial of degree  at most $r-1$. On the other hand $\mo{A_i\setminus\{\alpha\}}=r+\rho-2$ and hence $f_a(\alpha)$ can be reconstructed from any $r$ values in  the locations of $A_i$.
     Since $\C$ is linear and $\deg f_a\le\big(\frac{K}{r}-1\big)\big(\rho+r-1)+r-1=K-r+r-1+(\frac{K}{r}-1)(\rho-1)$, then $\C$ is $(r,\rho)$-optimal.
\end{proof}

The previous construction is a particular case of a more general one  based on the Chinese Remainder Theorem for rings \cite[Section V]{mcdonald}. 
\begin{definizione}
    Let $R$ be a ring. Two ideals $I$ and $J$ are called \emph{coprime} if $I+J=R$.
\end{definizione}

\begin{teorema}
    \textbf{(Chinese Remainder Theorem)} Let $R$ be a commutative ring and let $I_1,\dots,I_n$ be pairwise coprime ideals of $R$. Let $I=I_1\cap\dots\cap I_n$. The ring morphism $$\Phi\colon R/I\to R/I_1\times\dots R/I_n, \quad \ r+I\mapsto (r+I_1,\dots,r+I_n) \ ,$$ is an isomorphism. 
\end{teorema}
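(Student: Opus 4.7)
The plan is to verify the three standard properties of $\Phi$: well-definedness (together with the ring morphism property), injectivity, and surjectivity. Well-definedness of $\Phi$ is immediate from $I \subseteq I_i$ for every $i$, and the fact that $\Phi$ is a ring homomorphism follows componentwise from the definitions of the quotient ring operations on each factor $R/I_i$.

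For injectivity, I would unwind the kernel: $r + I \in \ker \Phi$ means $r \in I_i$ for every $i$, hence $r \in I_1 \cap \dots \cap I_n = I$, so $r + I = 0$. This uses nothing beyond the definition of $I$.

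The main work, and the main obstacle, is surjectivity, which requires promoting pairwise coprimality of the $I_i$ to a statement strong enough to produce arbitrary tuples. The plan is to construct, for each $i$, an element $e_i \in R$ satisfying $e_i \equiv 1 \pmod{I_i}$ and $e_i \equiv 0 \pmod{I_j}$ for all $j \neq i$. To build $e_i$, I would invoke pairwise coprimality to write, for each $j \neq i$, $1 = a_{ij} + b_{ij}$ with $a_{ij} \in I_i$ and $b_{ij} \in I_j$, and then set
\[
e_i = \prod_{j \neq i} b_{ij} = \prod_{j \neq i} (1 - a_{ij}).
\]
Expanding this product, every term except the leading $1$ contains some factor $a_{ij} \in I_i$, so $e_i \equiv 1 \pmod{I_i}$; and the factor $b_{ij} \in I_j$ shows $e_i \in I_j$ for each $j \neq i$.

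Once the $e_i$ are in hand, surjectivity is routine: given any $(r_1 + I_1, \dots, r_n + I_n) \in R/I_1 \times \dots \times R/I_n$, the element $r = \sum_{i=1}^n r_i e_i$ satisfies $r \equiv r_i \pmod{I_i}$ for every $i$, so $\Phi(r + I) = (r_1 + I_1, \dots, r_n + I_n)$. Combined with injectivity and the homomorphism property, this shows $\Phi$ is a ring isomorphism. The only delicate point is that pairwise coprimality, rather than some stronger joint hypothesis, already suffices to construct the $e_i$; this is where the product-of-$(1-a_{ij})$ trick is essential, since we cannot directly assert that $I_i$ and $\bigcap_{j \neq i} I_j$ are coprime without this argument.
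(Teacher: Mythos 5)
Your proof is correct and complete. Note that the paper does not actually prove this statement: it is quoted as a classical result with a citation to McDonald's book on finite commutative rings, so there is no in-paper argument to compare against. Your argument is the standard one — injectivity from $\ker\Phi = (I_1\cap\dots\cap I_n)/I = 0$, and surjectivity via the orthogonal-idempotent-like elements $e_i = \prod_{j\neq i}(1-a_{ij})$ built from pairwise coprimality. You correctly identify the one subtle point, namely that pairwise coprimality must be upgraded to coprimality of $I_i$ with $\bigcap_{j\neq i} I_j$, and the product trick handles exactly that. Nothing is missing.
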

\begin{cor}
    Let $h_1(x),\dots,h_n(x)\in \rx$ be  pairwise coprime polynomials. Then, for any $a_1(x),\dots,a_n(x)\in \rx $, there exists a unique polynomial $f\in \rx $ of degree at most $\sum_i \deg h_i$ such that $$f(x)\equiv a_i(x) \text{ mod } h_i(x) \ \text{ for all } 1\le i \le n \ .$$
\end{cor}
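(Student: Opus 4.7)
The plan is to deduce this corollary directly from the Chinese Remainder Theorem applied in the ring $R[x]$, and then extract a canonical representative via polynomial division. Concretely, I would set $I_i = (h_i(x)) \subseteq R[x]$ and consider $h(x) = h_1(x) h_2(x) \cdots h_n(x)$.

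The first step is to verify the hypothesis of the CRT: the ideals $I_i$ are pairwise coprime. This is precisely the assumption that the $h_i$ are pairwise coprime in $R[x]$, which by definition means $(h_i) + (h_j) = R[x]$ for $i\neq j$.

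The second step is to identify the intersection $I = I_1 \cap \dots \cap I_n$ with the principal ideal $(h)$. The inclusion $(h) \subseteq I$ is immediate. For the reverse inclusion, I would argue by induction on $n$ using the standard fact that for coprime ideals $J_1, J_2$ in any commutative ring one has $J_1 \cap J_2 = J_1 J_2$: a Bezout-type identity $1 = u + v$ with $u \in J_1$, $v \in J_2$ lets one write any $z \in J_1 \cap J_2$ as $z = zu + zv \in J_1 J_2$. Iterating gives $\bigcap_i (h_i) = (h_1)(h_2)\cdots(h_n) = (h)$.

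Applying the CRT then yields the ring isomorphism
\begin{equation*}
    \Phi\colon R[x]/(h) \;\xrightarrow{\sim}\; R[x]/(h_1) \times \cdots \times R[x]/(h_n),
\end{equation*}
so the tuple $(a_1 + (h_1), \dots, a_n + (h_n))$ has a unique preimage, which is a congruence class $f + (h)$ in $R[x]/(h)$ satisfying $f \equiv a_i \pmod{h_i}$ for every $i$.

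The final step is to upgrade this class to a polynomial representative of controlled degree. Since the $h_i$ are assumed to have unit leading coefficients (implicit in the applications considered above, e.g.\ $h_i(x) = x-\alpha_i$ in the interpolation context), so does $h$, and polynomial division by $h$ is well-defined with a unique quotient-remainder pair. Thus every class in $R[x]/(h)$ has a unique representative of degree strictly less than $\deg h = \sum_i \deg h_i$, giving the desired $f$.

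The main subtle point is the identification $\bigcap(h_i) = (h_1 \cdots h_n)$, which over an arbitrary commutative ring is not automatic and really does require the coprimeness assumption; and the implicit need for the leading coefficient of each $h_i$ to be a unit so that the division algorithm produces a canonical bounded-degree representative.
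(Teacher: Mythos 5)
Your proof is correct and follows exactly the route the paper intends: the corollary is stated as an immediate consequence of the Chinese Remainder Theorem given just before it, and your argument (pairwise coprime principal ideals, the identification $\bigcap_i (h_i) = (h_1\cdots h_n)$, then division by the monic product to extract a bounded-degree representative) is the standard way to fill in the details, including the correct observation that the unit leading coefficients are needed for the last step. Note only that uniqueness in fact requires the strict bound $\deg f < \sum_i \deg h_i$ that your division argument produces, rather than the non-strict bound in the statement as written, since otherwise $f + h_1\cdots h_n$ would be a second solution.
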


Let $R $ be a finite chain ring, let $A$ be a subtractive subset  of $\nr$, and let $A=\bigcup_{i=1}^l A_i$ be a partition of $A$. The annihilator polynomials of $A_i$,\\ $ {h_i(x)=\prod_{a\in A_i} (x-a)}$ generate  pairwise coprime ideals of $\rx$.
\begin{teorema}
    Let $R$ be a finite chain ring and let $A$ be a subtractive subset of $\nr$. Let $A=\bigcup_{i=1}^l A_i$ be a partition of $A$ such that $\mo{A_i}=n_i$ for all  $1\le i \le l$. Let $$\psi\colon R^K\to \mathcal{F}_{K_1}\times\dots\times\mathcal{F}_{K_l},   \quad a\mapsto (a_1(x),\dots,a_l(x)) \ ,$$ be an injective mapping, where   $\mathcal{F}_{K_i}$ is the space of polynomial of degree less then $K_i$. Let $h_i(x)$ be the annihilator polynomial of  $A_i$. For any message vector $a\in R^K$ we define the encoding polynomial $f_a(x)$ as the unique polynomial of degree less then $n$ such that $$f_a(x)=a_i(x)\text{ mod } h_i(x) \ . $$ Let  \begin{equation*}
   \C=\bigg\{ (f_a(\alpha), \ \alpha\in A) \mid a\in R^K\bigg\} \ .
    \end{equation*}
    Then, $\C$ is a free LRC code of rank $K$. Moreover $\C$ can be partitioned into  $l$ disjoint local codes $\C_i$, where $\C_i$ is an $(n_i,K_i)$-MDS code. 
\end{teorema}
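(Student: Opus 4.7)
The plan is to proceed in three steps: first, show that the encoding polynomial $f_a(x)$ is well-defined via the Chinese Remainder Theorem; second, show that $\C$ is free of rank $K$; and third, identify the restriction of $\C$ to each block $A_i$ as an MDS code, from which local recoverability follows immediately.

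For the first step, I would verify that the ideals $(h_1(x)),\ldots,(h_l(x))$ are pairwise coprime in $\rx$. This is where the subtractivity hypothesis on $A\subseteq\nr$ is essential: for $\alpha\in A_i$ and $\beta\in A_j$ with $i\ne j$ the element $\beta-\alpha$ is a unit of $R$, so the identity $\tfrac{1}{\beta-\alpha}(x-\alpha)-\tfrac{1}{\beta-\alpha}(x-\beta)=1$ shows that $(x-\alpha)$ and $(x-\beta)$ are comaximal in $\rx$; iterating over the factors of $h_i$ and $h_j$ one obtains $(h_i(x))+(h_j(x))=\rx$. The Chinese Remainder Corollary stated just before the theorem then produces a unique $f_a(x)$ of degree less than $n=\sum_i n_i$ satisfying the prescribed congruences.

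For the second step, I would consider the composed encoding map $\Phi\colon R^K\to R^n$, $a\mapsto(f_a(\alpha))_{\alpha\in A}$. Assuming $\psi$ is $R$-linear (implicit in the statement and needed for $\C$ to be a submodule), $\Phi$ is $R$-linear. It is injective: if $\Phi(a)=\Phi(b)$ then $f_a-f_b$ is a polynomial of degree less than $n=\mo{A}$ vanishing on the well-conditioned set $A$, so by \Cref{prop:nradici} one has $f_a=f_b$ as polynomials; uniqueness in CRT forces $a_i(x)=b_i(x)$ for every $i$, and injectivity of $\psi$ finally yields $a=b$. Hence $\C\cong R^K$ is free of rank $K$.

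For the third step, since $h_i(\alpha)=0$ for every $\alpha\in A_i$ and $f_a(x)\equiv a_i(x)\pmod{h_i(x)}$, one has $f_a(\alpha)=a_i(\alpha)$ on $A_i$. After reordering coordinates by blocks, the restriction $\C_i$ is therefore exactly the evaluation code of $\mathcal{F}_{K_i}$ on $A_i$. Because $A_i$ is subtractive, hence well-conditioned, and $K_i\le n_i$, a non-zero polynomial of degree less than $K_i$ has at most $K_i-1$ roots in $A_i$ by \Cref{prop:nradici}; this simultaneously shows that the evaluation is injective (so $\C_i$ is free of rank $K_i$) and that every non-zero codeword of $\C_i$ has weight at least $n_i-K_i+1$. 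Combined with the Singleton bound, $\C_i$ is an $(n_i,K_i)$-MDS code. Local recovery on $\C$ is immediate: any erased position in $A_i$ is reconstructed from the other $n_i-1\ge K_i$ positions via MDS decoding of $\C_i$. The main obstacle is the coprimality step: it truly requires $A$ to be subtractive inside $\nr$ (mere well-conditioning would not suffice, since a zero-divisor or zero in some $A_i$ would destroy comaximality between $(h_i)$ and $(h_j)$); a secondary subtlety worth making explicit is the implicit $R$-linearity of $\psi$, without which $\C$ cannot be regarded as a linear code of rank $K$.
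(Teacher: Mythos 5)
Your proof is correct and follows the same route as the paper: the paper's own argument is essentially your third step (CRT gives $f_a\equiv a_i \bmod h_i$, hence $f_a|_{A_i}$ is a degree-$<K_i$ polynomial evaluated on $n_i>K_i$ well-conditioned points, so each $\C_i$ is MDS), with the coprimality of the $(h_i)$ asserted in the surrounding text and the freeness/rank claim left implicit. Your version merely fills in those details (the comaximality computation via subtractivity, injectivity of the encoding via \Cref{prop:nradici}, and the observation that $\psi$ must be $R$-linear), which is a useful tightening rather than a different proof.
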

\begin{proof}
Let $f_a(x)$ be  the encoding polynomial of the message vector $a$.  By construction, for all $1\le i\le l$, there exists a polynomial $g(x)$ such that $$ f_a(x)=g(x)h_i(x)+a_i(x) \ .$$ Thus,  for all $\alpha\in A_i$, $f_a(\alpha)=a_i(x)$. Hence, the restriction of $f_a(x)$ to $A_i$ is a polynomial of degree less then $K_i$. Since $\mo{A_i}=n_i$,   $ f_a(x)\big|_{A_i}$ is a polynomial of degree less then $K_i$ evaluated on $n_i>K_i$ points. Therefore the set of codewords $$(f_a(\alpha), \alpha\in A_i) $$  form an $(n_i,K_i)$-MDS code for all $1\le i \le l $. 
\end{proof}
We observe that the minimum distance of the code constructed in this way is at least the minimum between the distances of the local codes $\C_i$. 
\subsection{LRC codes with non-well-conditioned sets}
The most significant limitation in the previous approaches is the restriction on the  code length. The maximum code length coincides with the maximum size of a well-conditioned set. To address this problem, we now try to extend \Cref{teo:tbcodes} to non-well-conditioned sets.\\

For simplicity, let $R=GR(p^s,m)$  be a Galois ring and let  $N(R)$ denote its  group of units  having  size $p^{m(s-1)}(p^m-1)$.\\ 
Let $G$ be the maximal cyclic subgroup of $N(R)$ of order coprime with $p$ and let $H$ be a subgroup of $G$. The cosets $A_1,\dots,A_l$ of $H$ in $N(R)$ induce a partition of $N(R)=\bigcup_{i=1}^lA_i$. While $H$ is subtractive in $\nr$ (see \Cref{prop:maxsubwc}), the same does not hold true for  $N(R)$. However $N(R)$  contains a (maximal) subtractive subset. Up to reordering, we can assume $\mathcal{A}=\bigcup_{i=1}^m A_i$, $m<l$, to be a maximal subtractive subset in $\nr$.\\ 

The difference between \Cref{teo:tbcodes} and the next construction lies in the choice of the set of evaluation points: in the former a maximal subtractive subset is used, while in the latter the entire $N(R)$ is employed.

\begin{teorema}\label{thm:multiblocks}

    Let $r\ge 1$ and let  $N(R)=\bigcup_{i=1}^l A_i$ be a partition of $N(R)$ into $l$ subtractive subsets  $A_i$ of size $r+1$ for all $1\le i \le l$. Let $\mathcal A= \bigcup_{i=1}^m A_i$, $ m< l$, be a maximal subtractive subset of $N(R)$.  Let $g(x)\in\rx$ be an \good polynomial on the blocks of the partition of $N(R)$. For $t\le l $, set $n=(r+1)l$ and $K=rt$. Let $$a=(a_{i,j}, \ 0\le i\le r-1, \ 0\le j\le t-1)\in R^K \ .$$ We define 
    \begin{equation}\label{eq:encpomult}
        f_a(x)=\sum_{i=0}^{r-1}\sum_{j=0}^{t-1} a_{i,j}g(x)^jx^i \ ;
    \end{equation}
    and \begin{equation*}
   \C=\bigg\{ (f_a(\alpha), \ \alpha\in N(R)) \mid a\in R^K\bigg\} \ .
    \end{equation*}
    Then $\C$ is a free $(n,K,r)$-code where $n=\mo{N(R)}=p^{m(s-1)}(p^m-1)$  and minimum distance  $$d=n-p^{m(s-1)}\bigg(K+ \frac{K}{r}-2\bigg)=p^{m(s-1)} d_{\C'} \ ,$$  where $\C'=\C\big|_\mathcal A$ is the restriction of $\C $ to the maximal subtractive subset $\mathcal A=\bigcup_{i=0}^m A_i$. 
\end{teorema}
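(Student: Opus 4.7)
The plan is to adapt the Tamo--Barg argument of Theorem~\ref{teo:tbcodes} to the situation in which the evaluation set $N(R)$ is not itself well-conditioned, by decomposing it into maximal subtractive subsets. The rank, freeness, and locality claims will then follow essentially as in Theorem~\ref{teo:tbcodes}, while the minimum-distance statement splits into matching lower and upper bounds.

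The first step is to exploit the structure $N(R) \cong G_1 \times G_2$ with $G_1 = G$ cyclic of order $p^m - 1$ and $G_2 = 1 + M$ of order $p^{m(s-1)}$. Because $\mathcal{A}$ is a maximal subtractive subset, its reduction $\mathcal{A} \to F^\ast$ is a bijection, so every $\beta \in N(R)$ has a unique factorization $\beta = u \alpha$ with $\alpha \in \mathcal{A}$ and $u = \beta/\alpha \in 1 + M$. This yields the partition
\[
N(R) = \bigsqcup_{u \in 1 + M} u\mathcal{A},
\]
into $p^{m(s-1)}$ translates, each of size $p^m - 1$ and still subtractive, since $u\alpha - u\alpha' = u(\alpha - \alpha')$ is a product of units.

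Given this decomposition, locality, freeness, and the lower bound on $d$ follow quickly. Locality is identical to Theorem~\ref{teo:tbcodes}: since $g$ is constant on the subtractive block $A_i$, $f_a|_{A_i}$ has degree at most $r - 1$ on a subtractive set, so any missing value is recovered by Lagrange interpolation from the other $r$. Injectivity of the encoding (hence that $\C$ is free of rank $K$) comes from $\deg f_a \leq K + K/r - 2 < p^m - 1 = |\mathcal{A}|$ together with Proposition~\ref{prop:nradici}, and from the fact that the monomials $g^j x^i$ have pairwise distinct degrees. For the lower bound, I would apply Proposition~\ref{prop:nradici} on each of the $p^{m(s-1)}$ translates: a nonzero $f_a$ has at most $K + K/r - 2$ zeros per translate, hence at most $p^{m(s-1)}(K + K/r - 2)$ zeros on $N(R)$, and the minimum weight is therefore at least $n - p^{m(s-1)}(K + K/r - 2) = p^{m(s-1)} d_{\C'}$.

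The main obstacle is matching this bound with an explicit codeword. My plan is to choose a monic minimum-weight codeword of $\C'$, explicitly
\[
f_{a^\ast}(x) = q(x) \prod_{i=1}^{K/r - 1}\bigl(g(x) - c_i\bigr),
\]
with $q$ monic of degree $r - 1$ whose $r - 1$ zeros lie in $\mathcal{A} \setminus (A_1 \cup \cdots \cup A_{K/r - 1})$. This $f_{a^\ast}$ lies in $\fa^r$, is monic of degree exactly $K + K/r - 2$, and vanishes on $K + K/r - 2$ distinct points of $\mathcal{A}$. The candidate codeword is then $f_b = \gamma^{s-1} f_{a^\ast}$: for $\beta \in N(R)$, $f_b(\beta) = 0$ if and only if $f_{a^\ast}(\beta) \in \operatorname{Ann}(\gamma^{s-1}) = \gamma R$, i.e.\ $\bar f_{a^\ast}(\bar\beta) = 0$ in $F$. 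Because $\bar f_{a^\ast}$ is monic of degree $K + K/r - 2$ in $F[x]$ and already has that many distinct roots in $F^\ast$ (the residues of the chosen zeros in $\mathcal{A}$), these are exactly its roots, so $f_b$ vanishes on the full preimage in $N(R) \twoheadrightarrow F^\ast$ of those residues -- a disjoint union of $p^{m(s-1)}$-fibers. This produces weight $n - p^{m(s-1)}(K + K/r - 2)$, closing the argument.
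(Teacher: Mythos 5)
Your proposal is correct and, for the crucial direction (exhibiting a codeword of the claimed weight), follows the same strategy as the paper: take a weight-minimizing encoding polynomial for $\C'$ on $\mathcal A$, multiply it by $\gamma^{s-1}$, and observe that its zero set in $N(R)$ becomes the full preimage under reduction mod $M$ of the zero set in $F^{*}$, hence a union of fibers of size $p^{m(s-1)}$. The two places where you deviate are both improvements in rigor rather than changes of strategy. First, for the lower bound the paper invokes \Cref{cor:nroots} (the Hensel-lifting root count) directly, whereas you partition $N(R)=\bigsqcup_{u\in 1+M}u\mathcal A$ into $p^{m(s-1)}$ subtractive translates and apply \Cref{prop:nradici} on each; this is more self-contained and makes visible exactly where the factor $p^{m(s-1)}$ comes from. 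Second, the paper asserts abstractly (via optimality of $\C'$) that some $f_b$ has exactly $K+\frac{K}{r}-2$ roots in $\mathcal A$, and then tacitly assumes $\overline{f_b}\neq 0$ so that its roots in $F$ are precisely the residues of those roots; your explicit monic choice $f_{a^{*}}=q(x)\prod_{i}(g(x)-c_i)$ makes $\overline{f_{a^{*}}}$ monic of the right degree, which closes that small gap cleanly. One caveat shared with the paper: both arguments need $K+\frac{K}{r}-2<p^{m}-1$ (equivalently $t\le m$) for the injectivity of the encoding and for the constructed codeword to be nonzero, and the hypothesis $t\le l$ in the statement does not by itself guarantee this; you should state this assumption explicitly.
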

\begin{proof}
   The proof follows the same line of \ref{teo:tbcodes}. We explicitly compute the minimum distance of the code.\\ Let $f_a(x)$ be the encoding polynomial of the message vector $a$. The  maximum number of zeros of $f_a(x)$ establishes a bound on the minimum distance of $\C$.  Notice that $ \deg f_a(x)=K+\frac{K}{r}-2$, and hence, by \Cref{cor:nroots}, $$ d\ge n-\bigg(K+\frac{K}{r}-2\bigg)p^{(s-1)m} \ .$$ We  show that equality holds. Let  $\C'$ be the code obtained from $\C$ by puncturing the last $n-p^m+1$ positions, i.e., we left with  an $(p^m-1,K,r)$-LRC code over the subtractive subset $\mathcal A=\bigcup_{i=0}^m A_i$.  Note that $f_a(x)$ has at most $K+\frac{K}{r}-2$ roots in $\mathcal A$.  Moreover,  the LRC bound \eqref{bd:locfcr} ensures that there exists  a message vector $b\in R^K$  such that its encoding polynomial  $f_b(x)$ has exactly $K+\frac{K}{r}-2$ roots in $\mathcal A$. Let $\{\bar{x}_1,\dots,\bar{x}_{K+\frac{K}{r}-2}\}$ be the set of zeros of $f_b$ in $ \mathcal A$. Since the encoding is linear, $f'(x)=p^{s-1}f_b(x)$ is the encoding polynomial associated to the message vector $p^{s-1 }b\in R^K$. If  $\{\bar{x}_1,\dots,\bar{x}_{K+\frac{K}{r}-2}\}$ are the zeros of $f_b$ in $\mathcal A$, then $\{\bar{x}_1+M,\dots,\bar{x}_{K+\frac{K}{r}-2}+M\}$ are  the zeros of $f'$ in  $\nr$. Since $\mo M=p^{m(s-1)}$, $f'$ has $p^{m(s-1)}\big(K+\frac{K}{r}-2\big)$  roots in $N(R)$ and the claim follows. 
\end{proof}
\begin{rem}
We have that:

\begin{itemize}

    \item The central problem of all the previous  constructions is to  identify families of good polynomials. Construction \ref{thm:multiblocks} does not lead to a wider class of good polynomials.   Let $N(R)=\bigcup_{i=1}^l A_i$ be a partition of $\nr$ and let  $\mathcal A= \bigcup_{i=1}^m A_i, \ m<l$, be a maximal subtractive  subset in $\nr$. Let $h(x)=\prod_{a\in \mathcal A}(x-a)$ be the annihilator polynomial of $\mathcal A$. If $g'(x)$ is an \good polynomial for the partition of  $\nr$, then $g'(x)= g(x) \mod h$ where $g(x)$ is an $(r,m)$-good  polynomial for the partition $\mathcal A$. Therefore the class of $(r,l)$-good polynomials coincides modulo $h$ to the class of $(r,m)$-good polynomials. 

     \item We have removed the constraint on the maximum code length.  Nevertheless, the code does not meet the LRC bound \eqref{bd:locfcr} and thus it is not known whether it is optimal of not.  Let $\overline{C}$ be the projection of $C$ over the residue field of $R$. $\overline{C}$ is a repetition code. Indeed, a locally recoverable code of length $p^m-1$, dimension $K$, and minimum distance $d=n-K-\frac{K}{r}+2$ is iterated $p^{m(s-1)}$ times. Therefore, $\overline{C}$ is an LRC code with multiple disjoint recovering sets, consisting of $p^{m(s-1)}-1$ recovering sets of size 1 and $p^{m(s-1)}$ of size $r$. 

\end{itemize}
\end{rem}

A natural question arises: is there  any constraint on the maximum length of a code meeting the LRC bound, as a function of alphabet size?
\subsection{Bounds on the maximum length of an LRC over finite chain rings}
In the following, we will see that the problem of determining the maximum possible length of an LRC code
over a finite chain ring is closely related to the same problem over fields.\\

Let $R$ be a finite chain ring,  let $\gamma$ be the generator of the maximal ideal and let $s$ be its  nilpotency index.  Let $F$ be the residue field of $R$, i.e. $F=R/( \gamma)$. For any $C\subseteq R^n$ we define the code $(C:r)=\{e \in R^n \mid re\in C\}$. \begin{definizione}
    To any code $C\subseteq R^n$ we associate the tower of codes  over $R$\begin{equation*}
        C=(C:\gamma^0)\subseteq\dots\subseteq(C:\gamma^i)\subseteq\dots\subseteq(C:\gamma^{s-1});
    \end{equation*}
    and its projection over $F$\begin{equation*}
        \\ \overline{C}=\overline{(C:\gamma^0)}\subseteq\dots\subseteq\overline{(C:\gamma^i)}\subseteq\dots\subseteq\overline{(C:\gamma^{s-1})}.
    \end{equation*}
\end{definizione}
\begin{proposizione}
    If $C$ is an $R$-linear code of length $n$, rank $K$, minimum distance $d$ then  $\overline{(C:\gamma^{s-1})} $ is a linear code over $F$  of length $n$, dimension $K$ and minimum distance $d$. 
\end{proposizione}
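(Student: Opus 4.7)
The claims on length and linearity are immediate: $(C:\gamma^{s-1})$ is an $R$-submodule of $R^n$ (if $\gamma^{s-1}e,\gamma^{s-1}e'\in C$ then so is $\gamma^{s-1}(re+r'e')$ for $r,r'\in R$), hence its image under the canonical projection $\pi\colon R^n\to F^n$ is an $F$-subspace of length $n$.

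For the minimum distance, the key observation is that for any $e\in R^n$ the coordinate $(\gamma^{s-1}e)_i=\gamma^{s-1}e_i$ vanishes exactly when $\overline{e_i}=0$, so $\mathrm{wt}(\gamma^{s-1}e)=\mathrm{wt}(\bar e)$. Call $d'$ the minimum distance of $\overline{(C:\gamma^{s-1})}$. First I would show $d\le d'$: pick a nonzero $\bar e\in\overline{(C:\gamma^{s-1})}$ of weight $d'$; any lift $e$ has $\gamma^{s-1}e\in C$, and this is nonzero since $\bar e\ne 0$, with weight equal to $\mathrm{wt}(\bar e)=d'$. Conversely, for $d'\le d$, take a minimum-weight codeword $c\in C$ and factor $c=\gamma^{j}c'$ where $c'$ has at least one coordinate that is a unit. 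Then $\gamma^{s-1}c'=\gamma^{s-1-j}c\in C$, so $c'\in(C:\gamma^{s-1})$; moreover $\bar{c'}\ne 0$ and has the same support as $c$, so $\mathrm{wt}(\bar{c'})=d$.

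For the dimension, I would use the $R$-linear map
\[
\varphi\colon (C:\gamma^{s-1})\longrightarrow \gamma^{s-1}R^n\cap C,\qquad e\longmapsto \gamma^{s-1}e.
\]
It is surjective by the very definition of $(C:\gamma^{s-1})$, and since $\gamma^s=0$ we have $\ker\varphi=\gamma R^n$, which is contained in $(C:\gamma^{s-1})$. Hence $|\overline{(C:\gamma^{s-1})}|=|(C:\gamma^{s-1})/\gamma R^n|=|\gamma^{s-1}R^n\cap C|$, and it suffices to prove $|\gamma^{s-1}R^n\cap C|=q^{K}$ where $q=|F|$.

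The main obstacle is this last count, and it is handled using the standard form of the generator matrix recalled after \Cref{thm:infolocgraph}. Writing an arbitrary codeword as $c=\sum_{j,i}a_{j,i}\gamma^{j}v_{j,i}$ with $a_{j,i}\in R/\gamma^{s-j}R$ and $v_{j,i}$ the row containing $I_{k_j}$ in its pivot block, a column-by-column inspection of the block-triangular structure forces, step by step, $a_{j,i}=\gamma^{s-1-j}b_{j,i}$ in order that every coordinate of $c$ lie in $\gamma^{s-1}R$. Consequently $\gamma^{s-1}R^n\cap C=\gamma^{s-1}\cdot\mathrm{span}_F\{v_{j,i}\}$, and because the reductions $\overline{v_{j,i}}$ are in row-echelon form on the first $K$ columns they are $F$-linearly independent. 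This gives $|\gamma^{s-1}R^n\cap C|=q^{\sum_j k_j}=q^{K}$, hence $\dim_F\overline{(C:\gamma^{s-1})}=K$.
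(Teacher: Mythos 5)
Your proof is correct in substance, but note that it is not a variant of the paper's argument: the paper offers no proof here at all and simply points to Theorems 4.2 and 4.5 of \cite{norton}. What you have supplied is a self-contained reconstruction of that result using only material already recalled in the text, namely the weight identity $\mathrm{w}(\gamma^{s-1}e)=\mathrm{w}(\bar e)$ (valid because $\mathrm{Ann}(\gamma^{s-1})=(\gamma)$ in a chain ring) for the distance, and the isomorphism $(C:\gamma^{s-1})/\gamma R^n\cong \gamma^{s-1}R^n\cap C$ together with the standard-form generator matrix for the dimension count. That is a reasonable trade: the reference proves these facts once and for all for the whole tower $(C:\gamma^i)$, whereas your argument handles only the top layer but keeps the paper self-contained; the triangular induction forcing $a_{j,i}\in\gamma^{s-1-j}R$ is routine and closes correctly because $\mathrm{Ann}(\gamma^{j})=(\gamma^{s-j})\subseteq(\gamma^{s-1-j})$.

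One local error to fix in the distance argument: in the direction $d'\le d$, the reduction $\bar{c'}$ need \emph{not} have the same support as $c$. For instance over $\mathbb{Z}_4$ (so $\gamma=2$, $s=2$), the vector $c=(1,2)$ has $j=0$, hence $c'=c$ and $\bar{c'}=(1,0)$, whose support is strictly smaller. What is true, provided you choose the factorization with $c'_i=0$ whenever $c_i=0$, is the inclusion $\mathrm{supp}(\bar{c'})\subseteq\mathrm{supp}(c)$, giving only $\mathrm{w}(\bar{c'})\le d$. Fortunately that inequality is exactly the one you need ($d'\le \mathrm{w}(\bar{c'})\le d$), and combined with your first direction $d\le d'$ it still yields $d=d'$; so the claim of equal supports should simply be weakened to containment, and the proof stands.
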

For a proof see \cite[Theorem 4.2 and Theorem 4.5]{norton}.
\begin{proposizione}\label{prop:moddep}
If  $v_1\dots, v_u\in R^n$ are modularly dependent vectors in $R^n$ then $\overline{v_1},\dots,\overline{v_u}\in F^n$ are linearly dependent over $F$.
\end{proposizione}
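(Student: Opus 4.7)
The plan is direct: unpack the definition of modular dependence, reduce the dependency relation modulo the maximal ideal $(\gamma)$, and verify that the resulting relation in $F^n$ is nontrivial.

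First, I would recall that $v_1,\dots,v_u$ being modularly dependent means precisely the negation of the definition of modularly independent: there exist scalars $s_1,\dots,s_u \in R$, with at least one $s_j$ a unit, such that
\[
\sum_{i=1}^{u} s_i v_i = 0 \quad \text{in } R^n.
\]
Applying the canonical projection $R \to F = R/(\gamma)$ coordinatewise to both sides, which is an $R$-module homomorphism, yields
\[
\sum_{i=1}^{u} \overline{s_i}\,\overline{v_i} = 0 \quad \text{in } F^n.
\]

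The key observation is then that the reduction of the unit coefficient is nonzero: if $s_j$ is a unit in $R$, then $s_j \notin (\gamma)$ (since $(\gamma)$ is the maximal ideal and contains no units), and hence $\overline{s_j} \neq 0$ in $F$. Consequently the relation above is a nontrivial $F$-linear combination producing zero, so $\overline{v_1},\dots,\overline{v_u}$ are linearly dependent over $F$.

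I expect no real obstacle in this argument; it is essentially an unpacking of definitions, with the only substantive input being the standard fact that units in a local ring are exactly the complement of the maximal ideal. The point worth emphasizing in writing is why passing from ``at least one $s_i$ is a unit'' (the modular condition) to ``at least one $\overline{s_i}$ is nonzero'' (the $F$-linear condition) is valid, since this is the precise step where the stronger notion of modular dependence is used rather than mere linear dependence over $R$.
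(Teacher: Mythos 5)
Your proof is correct and follows essentially the same route as the paper's: negate the definition of modular independence to obtain a relation $\sum_i s_i v_i = 0$ with some $s_j$ a unit, reduce modulo the maximal ideal $(\gamma)$, and observe that the unit coefficient survives as a nonzero element of $F$. The only cosmetic difference is that the paper phrases the hypothesis as "not all zero divisors" rather than "at least one unit," which is the same condition in a finite chain ring.
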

\begin{proof}
    If $v_1,\dots,v_u$ are modularly dependent in $R^n$, i.e. , there exist $b_1,\dots,b_u\in R$, not all zero divisors, such that $\sum_{i=1}^u b_iv_i=0$. Hence $\gamma\mid \sum_{i=1}^u b_iv_i$ and $\sum_{i=1}^u \overline{b_i}\overline{v_i}=0$ with $\overline{b_i}$ not all  zero. Therefore $\overline{v_1},\dots,\overline{v_u}$ are linearly dependent over $F$. 
\end{proof}
\begin{proposizione}
    If $C$ is a locally recoverable code with locality $r$ over $R$ then $\overline{(C:\gamma^{s-1})}$ is a locally recoverable code with locality $\Tilde{r}\le r $ such that $\big\lceil\frac{K}{\Tilde{r}}\big\rceil=\big\lceil\frac{K}{r}\big\rceil$.  
 \end{proposizione}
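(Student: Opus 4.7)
The plan is to show that every recovering set in $C$ remains a recovering set in $\overline{(C:\gamma^{s-1})}$, so that the projected code has locality at most $r$, and then to choose $\tilde r$ so that the ceiling condition holds trivially.

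First, I fix a coordinate $i$ together with a recovering set $S_i$ for $i$ in $C$ of size at most $r$. The defining condition $|C_{S_i}|=|C_{S_i\cup\{i\}}|$ is equivalent to the injectivity of the puncturing map $C_{S_i\cup\{i\}}\to C_{S_i}$: every codeword $c \in C$ with $c_j=0$ for all $j \in S_i$ satisfies $c_i=0$. I then verify that this property survives the passage to $\bar C := \overline{(C:\gamma^{s-1})}$. Given $\bar c \in \bar C$ with $\bar{c}_j=0$ for $j \in S_i$, lift it to $e \in (C:\gamma^{s-1})$; then $\gamma^{s-1}e \in C$ and $e_j \in \gamma R$ for $j \in S_i$ imply $(\gamma^{s-1}e)_j = 0$ on $S_i$. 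Applying the recovery property of $C$ to the codeword $\gamma^{s-1}e$ gives $\gamma^{s-1}e_i=0$, whence $e_i \in \gamma R$ (since in a finite chain ring the annihilator of $\gamma^{s-1}$ is exactly $\gamma R$), and therefore $\bar{c}_i = 0$. Hence $S_i$ is a recovering set of size at most $r$ for coordinate $i$ in $\bar C$, and $\bar C$ has locality at most $r$.

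Equivalently, in the spirit of \Cref{prop:moddep}, one may phrase the recovery property in $C$ as a modular dependency $v_i - \sum_{j \in S_i} b_j v_j = 0$ among columns of a generator matrix of $C$, with unit coefficient $1$ in front of $v_i$; applying \Cref{prop:moddep} then yields the analogous linear relation $\bar{v}_i-\sum_{j\in S_i}\bar{b}_j\bar{v}_j=0$ over the residue field $F$, again exhibiting $S_i$ as a recovering set of size at most $r$ in $\bar C$. For the ceiling identity, the simplest legitimate choice is $\tilde r = r$: since locality is an upper bound and $r$ is a valid locality for $\bar C$, this choice is admissible and trivially gives $\lceil K/\tilde r\rceil = \lceil K/r\rceil$.

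The main obstacle I foresee is the lift-and-project step when $C$ is non-free, since $(C:\gamma^{s-1})$ may be strictly larger than $C$ and the column picture of locality is delicate. The argument goes through cleanly thanks to the chain-ring identity that the annihilator of $\gamma^{s-1}$ is precisely $\gamma R$, which converts the vanishing of $\gamma^{s-1}e_i$ into $e_i\in\gamma R$ and hence into $\bar c_i=0$, bridging the arithmetic in $R$ with the combinatorial recovery property in $\bar C$.
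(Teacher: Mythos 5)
Your argument is correct, but it diverges from the paper's proof on both parts of the statement. For the locality claim the paper simply cites the preceding proposition (modular dependencies in $R^n$ project to linear dependencies over $F$) and concludes that the locality cannot increase; you instead give a direct codeword-level argument, lifting a word of $\overline{(C:\gamma^{s-1})}$ that vanishes on $S_i$ to some $e\in(C:\gamma^{s-1})$, applying the recovery property of $C$ to $\gamma^{s-1}e$, and using $\mathrm{Ann}(\gamma^{s-1})=(\gamma)$ to descend. This is more self-contained: it avoids first recasting the recovery condition as a modular relation among generator-matrix columns with a unit coefficient at position $i$ (a step the paper leaves implicit) and it handles non-free $C$ verbatim. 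For the ceiling condition the two proofs genuinely part ways: the paper takes $\tilde r$ to be the (possibly smaller) actual locality of the projected code and deduces $\lceil K/\tilde r\rceil=\lceil K/r\rceil$ from the preservation of the minimum distance under $C\mapsto\overline{(C:\gamma^{s-1})}$ together with the LRC bound (an argument that implicitly uses that $C$ attains that bound); you instead observe that, under the paper's definition, locality is an upper-bound notion, so $\tilde r=r$ is itself an admissible locality for the projected code and the ceiling identity holds vacuously. Your reading is legitimate for the statement as written and suffices for the application that follows (transferring field-alphabet length bounds back to $R$), but note that it proves slightly less than the paper intends: it says nothing about the minimal locality of $\overline{(C:\gamma^{s-1})}$, which is what the paper's distance argument is tracking.
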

\begin{proof}
\Cref{prop:moddep} implies that the locality of $C$ cannot increase. The claim follows from the fact that minimum distance of $C$ and $\overline{(C:\gamma^{s-1})}$  coincides.
\end{proof}

Consequently, determining the maximum possible length of the LRC code $C$ over $R$ reduces to the problem of determining the maximum  possible length of 
the code $\overline{(C:\gamma^{s-1})}$ over $F$.
   While for small code distances ($d=3, 4$) optimal LRC codes  with unbounded length   over any fixed alphabets of size $q\ge r+1$ are known, for $d\ge 5$ there is an upper bound on the
length of the optimal LRC as a function of
its alphabet size.   Guruswami et al. in \cite{Guruswami} proved that, for  $d= 5$ the length of an optimal LRC over an alphabet of size $q$ is at most  $\mathcal{O}(q^2)$. Moreover, if $d>5$  the length is at most  $\mathcal{O}(q^3)$.

\section{Conclusions}
In analogy to  codes over Finite Fields, the minimum distance of a locally recoverable code over a finite chain ring is bounded as a function of the length $n$, the rank $K$ and the locality $r$ of the code.  This   bound is tight, as we have constructed a family of evaluation codes that achieves this bound for any value of the locality parameter $r$ and with length bounded by the size of the residue field of the ring. The construction relies on the use of good polynomials as its fundamental components. Moreover,  this  construction can be extended in various directions: for instance codes over non-well-conditioned sets or  codes  with multiple recovering sets are presented.
 An interesting extension of this work would be to try to build longer locally recoverable codes. A second promising line of research would be    to build a wider class of good polynomials over finite chain rings.

\section*{Acknowledgement}
This publication was created with the co-financing of the European Union FSE-REACT-EU, PON Research and Innovation 2014-2020 DM1062/2021, and the French National Agency of Research  via the project ANR-21-CE39-0009-BARRACUDA. The authors acknowledge support from Ripple’s University Blockchain Research Initiative.
The first and third author are members of the INdAM Research Group GNSAGA. A preliminary version of this work was partially presented on  talks given at Young researcher Algebra Conference  2023 in L'Aquila, Italy  and Convegno annuale del gruppo  UMI Crittografia e Codici in  Perugia, Italy by the first author.\\
 
\bibliographystyle{plain}
\bibliography{bibliografia.bib}

\begin{thebibliography}{10}

\bibitem{armand}
Marc~Andr{\'e} Armand.
\newblock List decoding of generalized {R}eed-{S}olomon codes over commutative
  rings.
\newblock {\em IEEE transactions on information theory}, 51(1):411--419, 2005.

\bibitem{barg}
Alexander Barg, Kathryn Haymaker, Everett~W Howe, Gretchen~L Matthews, and
  Anthony V{\'a}rilly-Alvarado.
\newblock Locally recoverable codes from algebraic curves and surfaces.
\newblock In {\em Algebraic Geometry for Coding Theory and Cryptography: IPAM,
  Los Angeles, CA, February 2016}, pages 95--127. Springer, 2017.

\bibitem{cadambe}
Viveck~R. Cadambe and Arya Mazumdar.
\newblock Bounds on the size of locally recoverable codes.
\newblock {\em IEEE Transactions on Information Theory}, 61(11):5787--5794,
  2015.

\bibitem{forbes}
Michael Forbes and Sergey Yekhanin.
\newblock On the locality of codeword symbols in non-linear codes.
\newblock {\em Discrete Mathematics}, 324:78--84, 2014.

\bibitem{gopalan}
Parikshit Gopalan, Cheng Huang, Huseyin Simitci, and Sergey Yekhanin.
\newblock On the locality of codeword symbols.
\newblock {\em IEEE Transactions on Information theory}, 58(11):6925--6934,
  2012.

\bibitem{calde}
Sreechakra Goparaju and Robert Calderbank.
\newblock Binary cyclic codes that are locally repairable.
\newblock In {\em 2014 IEEE International Symposium on Information Theory},
  pages 676--680. IEEE, 2014.

\bibitem{Guruswami}
Venkatesan Guruswami, Chaoping Xing, and Chen Yuan.
\newblock How long can optimal locally repairable codes be?
\newblock {\em IEEE Transactions on Information Theory}, 65(6):3662--3670,
  2019.

\bibitem{jin2}
Lingfei Jin.
\newblock Explicit construction of optimal locally recoverable codes of
  distance 5 and 6 via binary constant weight codes.
\newblock {\em IEEE Transactions on Information Theory}, 65(8):4658--4663,
  2019.

\bibitem{jin}
Lingfei Jin, Liming Ma, and Chaoping Xing.
\newblock Construction of optimal locally repairable codes via automorphism
  groups of rational function fields.
\newblock {\em IEEE Transactions on Information Theory}, 66(1):210--221, 2019.

\bibitem{kamath}
Govinda~M Kamath, N~Prakash, V~Lalitha, and P~Vijay Kumar.
\newblock Codes with local regeneration and erasure correction.
\newblock {\em IEEE Transactions on information theory}, 60(8):4637--4660,
  2014.

\bibitem{liu}
Jian Liu, Sihem Mesnager, and Lusheng Chen.
\newblock New constructions of optimal locally recoverable codes via good
  polynomials.
\newblock {\em IEEE Transactions on Information Theory}, 64(2):889--899, 2017.

\bibitem{macw}
Florence~Jessie MacWilliams and Neil James~Alexander Sloane.
\newblock {\em The theory of error correcting codes}, volume~16.
\newblock Elsevier, 1977.

\bibitem{mcdonald}
Bernard~R McDonald.
\newblock {\em Finite rings with identity}, volume~28.
\newblock Marcel Dekker Incorporated, 1974.

\bibitem{micheli}
Giacomo Micheli.
\newblock Constructions of locally recoverable codes which are optimal.
\newblock {\em IEEE transactions on information theory}, 66(1):167--175, 2019.

\bibitem{norton}
G.H. Norton and A.~Salagean.
\newblock On the hamming distance of linear codes over a finite chain ring.
\newblock {\em IEEE Transactions on Information Theory}, 46(3):1060--1067,
  2000.

\bibitem{norton2}
Graham~H Norton and Ana S{\u{a}}l{\u{a}}gean.
\newblock On the structure of linear and cyclic codes over a finite chain ring.
\newblock {\em Applicable algebra in engineering, communication and computing},
  10:489--506, 2000.

\bibitem{park}
Young~Ho Park.
\newblock Modular independence and generator matrices for codes over
  $\mathbb{Z}_m$.
\newblock {\em Designs, codes and Cryptography}, 50(2):147--162, 2009.

\bibitem{quintin}
Guillaume Quintin, Morgan Barbier, and Christophe Chabot.
\newblock On generalized {R}eed{--}{S}olomon codes over commutative and
  noncommutative rings.
\newblock {\em IEEE transactions on information theory}, 59(9):5882--5897,
  2013.

\bibitem{raw}
Ankit~Singh Rawat, Dimitris~S. Papailiopoulos, Alexandros~G. Dimakis, and
  Sriram Vishwanath.
\newblock Locality and availability in distributed storage.
\newblock {\em IEEE Transactions on Information Theory}, 62(8):4481--4493,
  2016.

\bibitem{tamo}
Itzhak Tamo and Alexander Barg.
\newblock A family of optimal locally recoverable codes.
\newblock {\em IEEE Transactions on Information Theory}, 60(8):4661--4676,
  2014.

\bibitem{xing}
Chaoping Xing and Chen Yuan.
\newblock Construction of optimal locally recoverable codes and connection with
  hypergraph.
\newblock {\em arXiv preprint arXiv:1811.09142}, 2018.

\end{thebibliography}
\end{document}